\numberwithin{equation}{section}
\newcommand{\abs}[1]{\left\lvert #1 \right\rvert}
\newcommand {\be} {\begin {equation}}
\newcommand {\ee} {\end {equation}}
\newcommand {\bes} {\begin {equation*}}
\newcommand {\ees} {\end {equation*}}
\newcommand{\es}[2] {\begin{equation} \label{#1} \begin{split} #2 \end{split} \end{equation}}
\newcommand{\Z}{\mathbb{Z}}
\newcommand{\R}{\mathbb{R}}
\newcommand{\HH}{\mathbb{H}}
\newcommand{\C}{\mathbb{C}}
\def\tr{\operatorname{tr}}
\def\Vol{\operatorname{Vol}}
\def\Area{\operatorname{Area}}
\def\ker{\operatorname{Ker}}
\def\im{\operatorname{Im}}
\def\coker{\operatorname{Coker}}
\def\s{\sigma}
\def\g{\gamma}
\theoremstyle{plain}
\newtheorem*{YF}{Yee's Formula}
\newtheorem{cor}{Corollary}
\newtheorem*{TF}{Tree Formula}
\begin{document}

\preprint{PUPT-2374}

\institution{PU}{Joseph Henry Laboratories, Princeton University, Princeton, NJ 08544}

\title{From Necklace Quivers to the $F$-theorem, Operator Counting, and $T(U(N))$}

\authors{Daniel R.~Gulotta, Christopher P.~Herzog,  and Silviu S.~Pufu}

\abstract{
The matrix model of Kapustin, Willett, and Yaakov is a powerful tool for exploring the properties of strongly interacting superconformal Chern-Simons theories in 2+1 dimensions.
In this paper, we use this matrix model to study necklace quiver gauge theories with
${\mathcal N}=3$ supersymmetry and $U(N)^d$ gauge groups in the limit of large $N$.
 In its simplest application, the matrix model computes the free energy of the gauge theory on $S^3$.  The conjectured $F$-theorem states that this quantity should decrease under renormalization group flow.  We show that for a simple class of such flows, the $F$-theorem holds for our necklace theories.
We also provide a relationship between matrix model eigenvalue distributions and numbers of chiral operators that we conjecture holds more generally.
Through the AdS/CFT correspondence, there is therefore a natural dual geometric interpretation of the matrix model saddle point in terms of volumes of 7-d tri-Sasaki Einstein spaces and some of their 5-d submanifolds.
As a final bonus, our analysis gives us the partition function of the $T(U(N))$ theory on $S^3$.

}

\date{May 2011}

\maketitle

\tableofcontents

\newpage

\section{Introduction}

Exact results in strongly-interacting field theories are generally rare.  In supersymmetric field theories, supersymmetry places strong constraints on various properties of chiral operators, and exact results pertaining to these operators might be possible even at strong coupling.  For three-dimensional superconformal theories, recent progress in finding such exact results that hold at any coupling was made in \cite{Kapustin:2009kz, Jafferis:2010un,Hama:2010av}, where the partition function of superconformal theories on $S^3$ with ${\cal N} \geq 2$ supersymmetry, as well as the expectation values of certain BPS Wilson loops, were reduced from path integrals to finite-dimensional multi-matrix integrals.  This major simplification was achieved through the localization technique developed in \cite{Pestun:2007rz} for four-dimensional theories.  

A consequence of this work is the realization that the ``free energy'' $F$ defined as minus the logarithm of the path integral on $S^3$,
 \es{FDef}{
  F = -\log \abs{Z_{S^3}} \,,
 }
with appropriate subtractions of power law divergences, might represent a good measure of the number of degrees of freedom in any field theory, supersymmetric or non-supersymmetric.  One way in which $F$ can be thought of as a measure of the effective number of degrees of freedom is the conjecture made in \cite{Jafferis:2011zi} that $F$ decreases along renormalization group (RG) flows and is stationary at RG fixed points.  This conjecture was called the ``$F$-theorem'' in \cite{Jafferis:2011zi} and was tested in a few RG flows in large $N$ supersymmetric $U(N)$ gauge theories \cite{Jafferis:2011zi,Amariti:2011da}.  If the $F$-theorem is true, then $F$ would be a 3-d analog of the central charge $c$ from two-dimensional field theory, which is known to have the same monotonicity property along RG flows \cite{Zamolodchikov:1986gt}, and it would resemble the Weyl anomaly coefficient $a$ from 4-d theories, which is also believed to decrease along RG trajectories \cite{Cardy:1988cwa}.  Actually, the free energy $F$ also resembles $a$ in another way:  just like $a$, $F$ can be used to find the exact R-symmetry in the infrared (IR) by computing $F$ as a function of a set of trial $R$-charges and then maximizing it \cite{Jafferis:2010un}.  The analogous procedure in 4-d theories is called ``$a$-maximization'' \cite{Intriligator:2003jj}.

In the context of the AdS/CFT correspondence \cite{Gubser:1998bc,Maldacena:1997re,Witten:1998qj}, the free energy $F$ can also be computed holographically from the gravity side of the correspondence.  In particular, for a CFT dual to $AdS_4$ of radius $L$ and effective four-dimensional Newton constant $G_N$, $F$ is given by \cite{Emparan:1999pm}
 \es{FHol}{
  F = \frac{\pi L^2}{2 G_N} \,.
 }
It was shown in \cite{Casini:2011kv} that in any CFT $F$ can also be interpreted as an entanglement entropy between a disk and its complement in the $\R^{2, 1}$ theory.   In turn, this entanglement entropy equals the holographic $a_*$ function defined in \cite{Myers:2010tj,Myers:2010xs} that was shown to always decrease along holographic RG flows.

If the $AdS_4$ background mentioned above arises as a Freund-Rubin compactification $AdS_4 \times Y$ of M-theory, where $Y$ is a seven-dimensional Sasaki-Einstein space threaded by $N$ units of four-form flux, then the quantization of the $AdS_4$ radius in Planck units implies that at large $N$ eq.~\eqref{FHol} becomes \cite{Herzog:2010hf}
  \es{MtheoryExpectation}{
  F = N^{3/2} \sqrt{\frac{2 \pi^6}{27 \Vol(Y)}} + o(N^{3/2}) \,.
 }
Here, the volume of $Y$ is computed with an Einstein metric that satisfies the normalization condition $R_{mn} = 6 g_{mn}$.   The Freund-Rubin solution $AdS_4 \times Y$ arises as the near-horizon limit of a stack of $N$ M2-branes placed at the tip of the Calabi-Yau cone $X$ over $Y$.  The $N^{3/2}$ behavior of the number of degrees of freedom had been known for quite some time, as the same large $N$ dependence appears in other quantities such as the thermal free energy that was computed in \cite{Klebanov:1996un} more than ten years ago.  A field theory explanation of this peculiar large $N$ dependence had been lacking until recently, mostly because explicit Lagrangian descriptions of the field theories living on coincident M2-branes have been discovered only in the past few years.  Starting with ABJM theory \cite{Aharony:2008ug} that describes $N$ M2-branes sitting at an orbifold singularity of $\C^4$, there are now many Chern-Simons matter $U(N)$ gauge theories that are proposed to describe the effective dynamics on $N$ M2-branes placed at the tip of various Calabi-Yau cones (see for example \cite{Jafferis:2008qz, Franco:2009sp, Aganagic:2009zk, Jafferis:2009th, Benini:2009qs}).  However, only few of these dualities have been extensively tested.  Extensive tests are difficult to perform because supergravity on $AdS_4 \times Y$ is supposed to be a good approximation to the dynamics of the CS-matter gauge theories only as one takes the gauge group ranks to infinity while keeping the CS levels fixed.  In this limit, the 't Hooft coupling $N/k$ becomes large, and there are no perturbative computations that one can perform.

That in ${\cal N} \geq 2$ theories one can write $F$ exactly in terms of a matrix integral means that by evaluating this integral one can test some of these $AdS_4$/CFT$_3$ dualities that have been put forth in recent years.   In particular, one can provide a field theory derivation of the $F \propto N^{3/2}$ large $N$ dependence of the number of degrees of freedom \eqref{MtheoryExpectation} on $N$ coincident M2-branes.  Moreover, one can compare the coefficient of $N^{3/2}$ in \eqref{MtheoryExpectation} to the volume of the internal space $Y$ that one can compute independently by integrating the square root of the determinant of the Sasaki-Einstien metric on $Y$.  Such comparisons were made in \cite{Marino:2009jd} in the case of ABJM theory, in \cite{Herzog:2010hf, Santamaria:2010dm} for a large class of theories with ${\cal N} = 3$ supersymmetry, and in \cite{Martelli:2011qj, Cheon:2011th, Jafferis:2011zi} for many theories with ${\cal N} = 2$ supersymmetry.  In \cite{Marino:2009jd, Santamaria:2010dm} $F$ was computed as a function of the 't Hooft coupling, while the other works focused only on the strong coupling regime and used the method developed in \cite{Herzog:2010hf} to evaluate the multi-matrix integrals at large $N$ and fixed CS levels in a saddle point approximation.

In this paper we build upon the work in \cite{Herzog:2010hf,Jafferis:2011zi} in several ways.  Our first main result consists of an infinite class of RG flows whose IR and ultraviolet (UV) fixed points preserve
${\cal N} = 3$ SUSY.   Via the AdS/CFT correspondence, at large $N$ such a flow is dual to a
holographic RG flow between two $AdS_4 \times Y$ extrema of 11-d supergravity.
That the extrema preserve ${\cal N} = 3$ supersymmetry 
means that the cones over the spaces $Y$ are hyperk\"ahler and, because of that, the spaces $Y$ are called tri-Sasakian. (We will provide a more detailed exposition of hyperk\"ahler spaces at the beginning of section~\ref{TORIC}.)  The 11-d SUGRA solutions that represent M2-branes sitting at the tip of a hyperk\"ahler cone were constructed in \cite{Gauntlett:1997pk}, where they were also related through string duality to a brane construction in type IIB string theory.  The type IIB brane construction consists of $N$ D3-branes filling the $0126$ directions, and a sequence of $(p_a, q_a)$ five-branes,\footnote{We adopt the convention where by a $(p, q)$ five-brane we mean a brane with $p$ units of NS5 charge and $q$ units of D5 charge.} $1 \leq a \leq d$, filling the $012$ directions and sitting at fixed angles in the $37$-, $48$-, and $59$-planes.  The $(p,q)$ branes break the D3-brane stack into $d$ segments, and the three-dimensional $U(N)^d$ CS-matter gauge theories then live on the segments.   The ${\cal N} = 3$ RG flows we want to study correspond to removing one of the $(p, q)$ branes or to a $(p_1, q_1)$ brane and a $(p_2, q_2)$ brane combining into a $(p_1 + p_2, q_1 + q_2)$ brane.  We find that the $F$-theorem is satisfied in these examples.  

Our results for the $F$-theorem follow most naturally from 
the gravity side of the AdS/CFT correspondence which provides an $SL(2, \mathbb{R})$ invariant result for $F$ as a function of arbitrary $(p_a, q_a)$.
We provide field theory confirmation of the gravity result in the cases where $p_a=0$ or $p_a=1$ for each $a$. 
While the 3-d field theory is simplest when $p_a = 0$ or 1,
for $p_a > 1$, ref.\ \cite{Gaiotto:2008ak} provides a more complicated field theory 
description involving 
an interpolating $T(U(N))$ theory that implements a sort of local $S$ duality.  Instead of using the $T(U(N))$ theory, we derive the matrix model for $p_a>1$ by bootstrapping from our large $N$ results.  We check that the matrix model yields the correct answer in the large $N$ limit,
and also that it is invariant under $SL(2, \mathbb{Z})$.
As a bonus, we discover the matrix model of the $T(U(N))$ theory.\footnote{%
 As we were completing this work, \cite{Benvenuti:2011ga} appeared which contains this same result.  Also, M.~Yamazaki, T.~Nishioka and Y.~Tachikawa have informed us they have independently derived the $T(U(N))$ partition function \cite{Nishioka:2011dq}.
}
The matrix model for arbitrary $p_a$ is our second main result.

Along the way we tie several loose ends left off from \cite{Herzog:2010hf}.  In \cite{Herzog:2010hf} the matrix model was solved explicitly only for theories on $(1, q_a)$ branes with $a \leq 4$.  We provide a general solution that holds for any number of $(p_a, q_a)$ branes.  In \cite{Herzog:2010hf} it was checked numerically in a few examples that the M-theory prediction for the volume of $Y$ that can be extracted from \eqref{MtheoryExpectation} agrees with the geometric computation performed by Yee \cite{Yee:2006ba}.  We build on Yee's work and prove that in general the two computations of $\Vol(Y)$ agree.  In \cite{Herzog:2010hf} it was conjectured that the volume $\Vol(Y)$ can be expressed in terms of a certain sum over trees.  We provide a proof of this tree formula.

Our third main result, contained in Section \ref{FIELDTHEORY},  
is a relationship between the eigenvalues in the matrix model and the number of chiral operators in the field theory.
Define $\psi(r)$ to be the number of chiral operators with $R$-charge smaller than $r$ for the $N=1$ gauge theory.  
The authors of \cite{Bergman: 2001qi, Martelli:2006yb} demonstrated that there is a relationship between $\psi(r)$ and the volume of the Sasaki-Einstein space in the large $r$ limit.
Given \eqref{MtheoryExpectation}, there must also be a relationship between $\psi(r)$ and $F$.
In fact, as we show in this paper for the necklace theories,
more precise relationships can be established between the matrix model and operator counting problems. 
 The operators in the necklace theories also have a monopole charge $m$ corresponding to $m$ flux units in a diagonal subgroup of the gauge group.  
 Thus, we may consider $\psi(r,m)$ to be the number of operators with $R$-charge less than $r$ and monopole charge less than $m$.  Let $X_{ab}$ be a hypermultiplet transforming under the fundamental of the $b$th gauge group and the antifundamental of the $a$th gauge group.
 We can consider $\psi_{X_{ab}}(r,m)$ to be defined as above but now with the operator $X_{ab}=0$.  
In the large $N$ limit, the matrix model is solved by a saddle point approximation for which the eigenvalues are complex numbers $\lambda_a = N^{1/2} x + i y_a$.  
The eigenvalues can be parametrized by an eigenvalue density $\rho(x)$, which turns out to be the same for each gauge group, and an imaginary part $y_a(x)$.  
Our two results are
\begin{eqnarray}
\label{resultone}
\left. \frac{\partial^3 \psi}{\partial r^2 \partial m} \right|_{m = rx / \mu} &=& \frac{r}{\mu} \rho(x) \ , \\
\label{resulttwo}
\left. \frac{\partial^2 \psi_{X_{ab}}}{\partial r \partial m} \right|_{m=rx / \mu} &=& \frac{r}{\mu} \rho(x)
[ y_b(x) - y_a(x) + R(X_{ab})] \ ,
\end{eqnarray}
where $\mu = 3 F / 4 \pi N^{3/2}$.
(Here we take the liberty of replacing the operator counts, which are discrete
functions, with continuous approximations.)
We believe these relations will hold more generally.

\section{Volumes of Toric tri-Sasaki Einstein Spaces}

On the gravity side of the AdS/CFT duality, we have M-theory backgrounds generated by placing a stack of $N$ M2-branes at the tip of a hyperk\"ahler cone, with $N$ large.  In this section our aim is to compute the free energy of the M2-brane theory purely from the supergravity side of the correspondence.  We start by introducing in section~\ref{TORIC} the main ingredients in constructing the 11-d supergravity solution, namely the toric hyperk\"ahler spaces.  In section~\ref{VOLUME} we build upon the results of Yee \cite{Yee:2006ba} and express the volume of these spaces in terms of the volume of a certain polygon for which we will provide a field theory interpretation later on.  In section~\ref{BRANE} we comment on some field theory implications of this formula, and show explicitly that the free energy decreases along certain RG flows, in agreement with the $F$-theorem proposed in \cite{Jafferis:2011zi}.

\subsection{Toric Hyperk\"ahler Cones from a Quotient Construction}
\label{TORIC}

We start by introducing the toric hyperk\"ahler cones.  The following discussion draws heavily from \cite{BD} and \cite{Yee:2006ba}.

A hyperk\"ahler manifold possesses $4n$ real dimensions and has a Riemannian metric $g$ which is k\"ahler with respect to 3 anti-commuting complex structures $J_1$, $J_2$, and $J_3$.  These $J_i$ furthermore 
satisfy the quaternionic relations $J_1^2 = J_2^2 = J_3^2 = J_1 J_2 J_3 = -1$.   The simplest example of a hyperk\"ahler manifold is the four-dimensional space of quaternions $\HH \cong \R^4$, endowed with the standard line element.  A single quaternion $q \in \HH \cong \R^4$ can be represented as a two-by-two complex matrix
\be
q = \left(
\begin{array}{cc}
u & v \\
- \bar v & \bar u 
\end{array}
\right) \label{uvQuaternion}
\ee
parametrized in terms of two complex variables $u$ and $v$.  In terms of $u$ and $v$ the metric is $ds^2 = \abs{du}^2 + \abs{dv}^2$ and the three k\"ahler forms are
\begin{eqnarray}
\label{flat_triplet}
\omega^3 &=& - \frac{i}{2} (du \wedge d\bar u + dv \wedge d \bar v) \ , \\
(\omega^1 - i \omega^2) &=& i (du \wedge dv) \ . 
\end{eqnarray}
The $\omega^a$ transform as a triplet under the $SU(2)$ symmetry that acts as left multiplication on $q$.

The quaternions $\HH$ can also be written as a $U(1)$ bundle over $\R^3$ where the $U(1)$ fiber shrinks to zero size at the origin of $\R^3$.  This description comes from the uplift of the Hopf fibration from $S^3$ to $\R^4$ and makes explicit an $SO(3) \times U(1)$ subgroup of the $O(4)$ rotational symmetry of $\R^4$.  More explicitly, if one writes 
 \be
u = \sqrt{r} \cos \left( \theta/2 \right) e^{ - i \phi /2+i \psi} \; , \; \; \;
v = \sqrt{r} \sin \left( \theta/2 \right) e^{ - i \phi/2-i \psi} \ ,
\ee
then the standard line element on $\HH$ becomes
\begin{eqnarray}
ds^2 = 
du \, d\bar u + dv \, d \bar v 
= \frac{1}{4} \frac{d\vec{r}^2}{r}
 + r \left( d \psi - \frac{1}{2} \cos( \theta) d\phi \right)^2 \,,
 \label{R3Fibration}
\end{eqnarray}
where $\vec{r} = (r \sin \theta \cos\phi, r \sin \theta \sin \phi, r \cos \theta)$ is the usual parameterization of $\R^3$ in terms of spherical coordinates.  The coordinate $\psi \in [0, 2 \pi)$ parameterizes the $U(1)$ Hopf fiber.  From eq.~\eqref{uvQuaternion} we see that rotations in the $U(1)$ fiber correspond to phase rotations of $q$.

Another example of a hyperk\"ahler manifold is the Cartesian product of $d$-copies of the quaternions ${\mathbb H}^d \cong {\mathbb R}^{4d}$, also considered with the flat metric
 \es{MetricHd}{
  ds^2 = \sum_{a = 1}^d \left[ \frac{1}{4} \frac{d\vec{r}_a^2}{r_a}
 + r_a \left( d \psi_a - \frac{1}{2} \cos( \theta_a) d\phi_a \right)^2 \right]\,.
 }
Starting with $\HH^d$, one can construct construct a large number of further examples of toric hyperk\"ahler spaces using the hyperk\"ahler quotient procedure of Hitchin, Karlhede, Lindstr\"om, and Ro\v{c}ek \cite{HKLR}.  A toric hyperk\"ahler manifold can be defined to be a hyperk\"ahler quotient of ${\mathbb H}^d$ for some integer $d$ by a $(d-n)$-dimensional subtorus $N \subset T^d$.   The underlying reason that the quotient continues to be hyperk\"ahler is that the triplet of K\"ahler forms (\ref{flat_triplet}) is invariant under the $U(1)$ action that sends $u \to u e^{i \alpha}$ and $v \to v e^{-i \alpha}$.  

The hyperk\"ahler quotient procedure requires the data of how $N = T^{d-n}$ sits inside $T^d$.  The inclusion $N \subset T^d$ can be described by the short exact sequence of tori
 \es{ToriSE}{
  0 \longrightarrow T^{d-n} \stackrel{i}\longrightarrow T^d \stackrel{\pi}\longrightarrow T^n \longrightarrow 0 \,,
 }
where we also introduced the quotient $T^n \cong T^d / T^{d-n}$, as well as the inclusion map $i$ and the projection map $\pi$.  Each torus $T^k$ in this sequence can be identified with $\R^k / (2 \pi \Z^k)$, so the data of how $N$ sits inside $T^d$ can be encoded in how the standard lattice $\Z^{d-n} \subset \R^{d-n}$ sits inside $\Z^d \subset \R^d$, or equivalently how $\Z^d \subset \R^d$ projects down to $\Z^n \subset \R^n$.  Such a construction can be described by a short exact sequence of vector spaces
 \es{RSE}{
  0 \longrightarrow \R^{d-n} \stackrel{Q}{\longrightarrow} \R^d \stackrel{\beta}{\longrightarrow} \R^n \longrightarrow 0 \,,
 }
which restricts to a short exact sequence of free-$\Z$ modules (lattices):
 \es{ZSE}{
   0 \longrightarrow \Z^{d-n} \stackrel{Q}{\longrightarrow} \Z^d \stackrel{\beta}{\longrightarrow} \Z^n \longrightarrow 0 \,.
 }
Since $Q$ and $\beta$ are linear maps, they can be represented by matrices:  $Q$ by a $d \times (d-n)$ matrix and $\beta$ by an $n \times d$ matrix.  That the sequence \eqref{RSE} is exact means that precisely $n$ columns of $\beta$ are linearly independent (i.e.~$\beta$ is surjective), that $\beta Q = 0$, and that the $d-n$ columns of $Q$ are linearly independent (i.e.~$Q$ is injective).  That \eqref{RSE} restricts to \eqref{ZSE} further implies that $Q$ and $\beta$ have integer entries.  The torus $N = \R^{d-n} / (2 \pi \Z^{d-n})$ as we defined it has volume $(2 \pi)^{d-n}$.

The hyperk\"ahler quotient of ${\mathbb H}^d$ by the torus $N$ is defined to be the zero locus of a set of moment maps in addition to a quotient by the torus action.  The $3(d-n)$ moment maps are compactly expressed using the matrix equation
\begin{eqnarray}
\mu_i &=& \sum_{a=1}^d Q^a_i  \left[ q_a \sigma_3 q_a^\dagger  + 
\vec \lambda_a \cdot \vec \sigma
\right]
\\
&=&
\sum_{a=1}^d  Q^a_i
\left[ \left(
\begin{array}{cc}
r_{a3} & r_{a1} - i r_{a2}
 \\
r_{a1} + i r_{a2}  & -r_{a3}
\end{array}
\right)
+
\left(
\begin{array}{cc}
\lambda_{a3} & \lambda_{a1} - i \lambda_{a2} \\
\lambda_{a1} + i \lambda_{a2} & - \lambda_{a3} 
\end{array}
\right)
\right]
 \,,
\end{eqnarray}
where $\sigma_1$, $\sigma_2$, and $\sigma_3$ are the Pauli spin matrices.
The hyperk\"ahler quotient $X$ is then
\be
X \equiv {\mathbb H}^d /// N = \mu^{-1}(0) / N \ .
\ee

We are particularly interested in the case where $\vec \lambda_a = 0$ and the corresponding hyperk\"ahler manifold is a cone.  The base of this cone is a $4n-1$ real dimensional Riemannian manifold with positive scalar curvature and a locally free action of $SU(2)$ that descends from the $SU(2)$ rotating the three complex structures.  The base of such a cone is called a tri-Sasaki manifold.  Indeed, a Riemannian manifold $(Y,g)$ is tri-Sasaki if and only if the Riemannian cone $X = ({\mathbb R}^+ \times Y, dr^2 + r^2 \, g_{ij} dy^i dy^j)$ is hyperk\"ahler.  The induced metric on $X$ from ${\mathbb H}^d$ is Ricci flat, which in turn implies that the induced metric on $Y$ satisfies Einstein's equations with a positive cosmological constant.  In other words, $Y$ is also an Einstein manifold.

\subsection{The Volume of the tri-Sasaki Einstein Base}
\label{VOLUME}

Given a toric hyperk\"ahler cone $X$, we would like to compute the volume of the base $Y$ with respect to the induced metric from ${\mathbb H}^d$.  A brute force approach would be to introduce the variables $\vec t_j$ such that
\be
\vec r_a = \sum_j \beta_{aj} \vec t_j \ .
\ee
These constrained $\vec r_a$ automatically satisfy the moment map conditions.  
Plugging this expression into the line element \eqref{MetricHd} on ${\mathbb H}^d$ yields a metric on
$X \times T^{d-n}$.  Integrating the square root of the determinant of the metric up to a finite radial coordinate $1 = \sum_a r_a$ and dividing by $\Vol(T^{d-n})$ would yield $\Vol(Y)/4n$.

The approach outlined above appears to be difficult.  Instead, we will use a result due to Yee \cite{Yee:2006ba}, which was proven using an elegant localization argument:
\begin{YF}
Consider a tri-Sasaki Einstein manifold $Y$ defined via the short exact sequence (\ref{ZSE}) and hyperk\"ahler quotient construction described above.  If the metric is normalized such that
$R_{ij} = 2(2n-1) g_{ij}$, then the volume is
\be
\Vol(Y) = 
\frac{2^{d-n+1} \pi^{2n}}{(2n-1)! \Vol\left(N\right)}  \int \left( \prod_{j=1}^{d-n} d{ \phi^j} \right) \prod_{a=1}^{d} \frac{1}{1 + \left( \sum_{k=1}^{d-n} Q_k^a \phi^k \right)^2} \,.
\label{Yeesresult}
\ee
\end{YF}
This volume formula has the forgiving property of being invariant under rescaling the matrix
$Q \to \lambda Q$.  Under this rescaling the torus volume changes, $\Vol(N) \to \lambda^{n-d} \Vol(N)$.  The factor $\lambda^{n-d}$ is then canceled by the Jacobian introduced upon rescaling $\phi \to \phi / \lambda$.  
We discussed above that the columns of $Q$ can be chosen such they form a ${\mathbb Z}$-basis of the kernel of $\beta$ and $\Vol(N) = (2 \pi)^{d-n}$.  Note that if we were not so clever in our choice of $Q$, the volume formula would still give us the correct answer because of this scaling invariance. 

So far we have assumed that $N$ is isomorphic to $T^d$, but we can also
choose $N$ to be isomorphic to $N_T \times T^d$ for some finite 
abelian group $N_T$.
In that case, the image of $\beta$ is no longer all of $\Z^n$.
The cokernel of $\beta$ is dual to $N_T$.
The volume of $N$ for our choice of $Q$ is therefore
$(2 \pi)^{d-n} \abs{N_T} = (2 \pi)^{d-n} \abs{\coker \beta}$.

We now rewrite this volume integral in a more convenient fashion.
Let us assume that we chose the 
columns $Q$ to form a ${\mathbb Z}$ basis of $\mbox{Ker}(\beta)$. 
 In this case, the volume formula simplifies:
\begin{equation} \label{eq:vol}
\Vol(Y) = \frac{2 \pi^{3n-d}}{(2n-1)!}  \int \left( \prod_{j=1}^{d-n} d{\phi^j} \right) \prod_{a=1}^d 
\frac{1}{1+\left( \sum_k Q^a_k \phi^k \right)^2 }.
\end{equation}
A well-known result from Fourier analysis is
\begin{equation}
\frac{1}{1+x^2} = \frac{1}{2} \int_{-\infty}^{\infty} d{y} e^{-|y|+ixy}.
\end{equation}
Then $\Vol(Y)$ can be rewritten as
\begin{equation} \label{eq:volboth}
\Vol(Y) = \frac{\pi^{3n-d}}{(2n-1)! \cdot 2^{d-1}}
\int \left( \prod_{j=1}^{d-n} d{\phi^j} \right) \left( \prod_{a=1}^d  e^{-|y_a|} d{y_a} \right) \exp\left(\sum_{k,a} i Q^a_k \phi^k y_a \right).
\end{equation}
We are allowed to switch the order of integration, integrating over the $\phi^j$ first.\footnote{The integral \eqref{eq:volboth} is not absolutely convergent.
However,
if we multiply the integrand by $\exp \left[ -\sum_k \epsilon_k (\phi^k)^2 \right]$, for some small $\epsilon_k > 0$, then the
integral does converge absolutely.  We can then take the limit
$\epsilon_k \to 0^{+}$.
If we integrate out the $y_a$ then we get
$\lim_{\epsilon_k \to 0^+} \frac{2 \pi^{3n-d}}{(2n-1)!}  \int \left( \prod_{j=1}^{d-n} d{\phi^j} \right) \prod_{a=1}^d \frac{\exp \left[ -\sum_k \epsilon_k (\phi^k)^2 \right]}{1+\left( \sum_k Q^a_k \phi^k \right)^2 } = \frac{2 \pi^{3n-d}}{(2n-1)!}  \int \left( \prod_{j=1}^{d-n} d{\phi^j} \right) \prod_{a=1}^d \frac{1}{1+\left( \sum_k Q^a_k \phi^k \right)^2 }$.
If we integrate out the $\phi^j$ then we get
$\lim_{\epsilon_k \to 0^+} \frac{\pi^{2n}}{(2n-1)! \cdot 2^{n-1}} \int \left( \prod_{a=1}^d e^{-|y_a|} d{y_a} \right) \prod_{k=1}^{d-n} \frac{\exp -\left(\sum_a Q^a_k y_a\right)^2/4\epsilon_k}{2 \sqrt{\pi \epsilon_k}}= \frac{\pi^{2n}}{(2n-1)! \cdot 2^{n-1}} \int \left( \prod_{a=1}^d e^{-|y_a|} d{y_a} \right) \prod_{k=1}^{d-n} \delta \left(\sum_a Q^a_k y_a\right)$. 
}
We obtain 
\be \label{eq:volexp}
\Vol(Y)  =  \frac{\pi^{2n}}{(2n-1)! \cdot 2^{n-1}} \int \left( \prod_{a=1}^d e^{-|y_a|} d{y_a} \right) \prod_{k=1}^{d-n} \delta \left(\sum_a Q^a_k y_a\right)  \ .
\ee
To integrate over the delta functions, note that we can get a basis for ${\mathbb Z}^d$ by taking the basis for $\ker(\beta)$ and pullbacks of the basis for $\im(\beta)$.  The Jacobian for transforming from the standard basis of ${\mathbb Z}^p$ to this basis must be one, since both bases generate the same lattice.  In our new coordinates, we have two kinds of variables: $s_i$ corresponding to the columns of $Q$ and $t_i$ corresponding to the rows of $\beta$.    The product of delta functions is just $\delta(s_1) \delta(s_2) \cdots \delta(s_{d-n})$ and can be performed straightforwardly.  
If the rows of $\beta$ span ${\mathbb Z}^n$, the $t_j$ can be written
\be
y_a = \sum_{j=1}^{n} \beta_{aj} t_j \ .
\label{practicaltj}
\ee
The integral reduces to the following useful form:
\be
\Vol(Y) =   \frac{\pi^{2n}}{(2n-1)! \cdot 2^{n-1}} \int \left( \prod_{k=1}^n d{t_k} \right) \exp \left(-\sum_{a=1}^d \left| \sum_{j=1}^n \beta_{aj} t_j \right| \right) \,.
\label{usefulVol}
\ee
Note that if the rows of $\beta$ do not span ${\mathbb Z}^n$, then the Jacobian will have an extra factor of 
$|\coker(\beta)|$, which cancels the $|\coker(\beta)|$ in $\Vol(N)$.  So
\eqref{usefulVol} holds even if the rows of $\beta$ do not span $\Z^n$.
This integral form provides us with a corollary to Yee's Formula:
\begin{cor}
i) If a column $\beta_a$ is removed from $\beta$, $\Vol(Y)$ increases.
ii) If two columns $\beta_a$ and $\beta_b$ of $\beta$ are combined to form the new column $\beta_a+\beta_b$ of a new $\beta'$ with one fewer columns, $\Vol(Y)$ either increases or stays the same.  
The volume remains the same if and only if the two columns are proportional, $\beta_a = c \beta_b$ for some $c \in {\mathbb R}^+$. 
\label{cor:RG}
 \end{cor}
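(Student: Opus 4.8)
The plan is to reduce everything to the integral representation \eqref{usefulVol}. Writing $\beta_a = (\beta_{a1},\dots,\beta_{an}) \in \R^n$ for the $a$-th column of $\beta$ and abbreviating the positive prefactor as $C_n = \pi^{2n}/[(2n-1)!\,2^{n-1}]$, that formula reads $\Vol(Y) = C_n\, I(\beta)$ with $I(\beta) = \int_{\R^n} \exp\!\big(-\sum_{a=1}^{d} \abs{\langle \beta_a, t\rangle}\big)\, d^n t$, where $\langle \beta_a, t\rangle = \sum_j \beta_{aj} t_j$. Since $\beta$ has rank $n$ its columns span $\R^n$, so $\sum_a \abs{\langle\beta_a,t\rangle}$ is a genuine norm and $I(\beta)$ is finite. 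Because $C_n>0$, the whole corollary becomes a comparison of integrals of manifestly nonnegative integrands, for which pointwise domination is the natural tool.

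For part (i), removing the column $\beta_a$ simply deletes the nonnegative term $\abs{\langle\beta_a,t\rangle}$ from the exponent. The new integrand therefore dominates the old one at every point $t$, strictly except on the measure-zero hyperplane $\langle\beta_a,t\rangle=0$, so $I$ strictly increases; if the remaining columns fail to span $\R^n$ the integral becomes $+\infty$, which is still consistent with ``increases.'' For part (ii), combining $\beta_a,\beta_b$ into $\beta_a+\beta_b$ replaces the pair of terms $\abs{\langle\beta_a,t\rangle}+\abs{\langle\beta_b,t\rangle}$ by the single term $\abs{\langle\beta_a+\beta_b,t\rangle}=\abs{\langle\beta_a,t\rangle+\langle\beta_b,t\rangle}$. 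The triangle inequality gives $\abs{\langle\beta_a,t\rangle+\langle\beta_b,t\rangle}\le\abs{\langle\beta_a,t\rangle}+\abs{\langle\beta_b,t\rangle}$ pointwise, so the new exponent is larger and the new integrand dominates the old; hence $I(\beta')\ge I(\beta)$ and $\Vol(Y)$ increases or stays the same.

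The equality case is the crux and the expected main obstacle. Since $I(\beta)$ is finite and the $\beta'$-integrand dominates the $\beta$-integrand, the integrals agree if and only if the triangle inequality is saturated for almost every $t$, i.e. $\langle\beta_a,t\rangle\,\langle\beta_b,t\rangle\ge 0$ a.e. I would settle this by cases on the two linear functionals. If $\beta_a,\beta_b$ are linearly independent, then $t\mapsto(\langle\beta_a,t\rangle,\langle\beta_b,t\rangle)$ is onto $\R^2$, so the preimage of the open second and fourth quadrants is a nonempty open set of positive measure on which the product is negative, forcing strict inequality. If instead $\beta_a=c\beta_b$ with $c<0$, the product equals $c\,\langle\beta_b,t\rangle^2<0$ for a.e. $t$, again strict. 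Only $\beta_a=c\beta_b$ with $c>0$ makes the product $\ge 0$ everywhere, in which case the triangle inequality is an equality identically and the volume is unchanged; this is exactly the stated condition $\beta_a = c\beta_b$, $c\in\R^+$.

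Two minor technical points must be nailed down but follow from the rank-$n$ hypothesis. First, to turn ``dominates pointwise and is strictly larger on a positive-measure set'' into a strict inequality of integrals one needs $\int f < \infty$, which holds because the original columns span $\R^n$. Second, in the equality case one should confirm the combined configuration still has full-rank columns, so that the common value $I(\beta')=I(\beta)$ is finite rather than a spurious $\infty=\infty$; this holds because replacing the proportional pair $\beta_a,\beta_b$ by $(1+c)\beta_b$ leaves the span of the columns unchanged. I expect no serious difficulty beyond the careful measure-theoretic bookkeeping of the saturation set.
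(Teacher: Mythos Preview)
Your proposal is correct and follows essentially the same approach as the paper: both reduce to the integral representation \eqref{usefulVol} and use pointwise monotonicity of the integrand for part (i) and the triangle inequality for part (ii). Your treatment is in fact more careful than the paper's, which dispatches the equality case in a single sentence (``equality occurs only when the expressions inside the absolute values are proportional''), whereas you explicitly separate the linearly independent, negatively proportional, and positively proportional cases and address the finiteness and rank issues.
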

  \begin{proof}
 (i) is true 
 because the integrand of (\ref{usefulVol}) (which is positive definite) increases when a column of $\beta$ is removed.  
 (ii) is true because the absolute value of a sum is less than or equal to the sum of the absolute values.  Equality occurs only when the expressions inside the 
 absolute values are proportional.
\end{proof}

The volume formula (\ref{usefulVol}) leads to another useful corollary of Yee's Formula:
\begin{cor}
Let ${\cal P} \in \R^n$ be the polytope
 \es{Polytope}{
  {\cal P} = \left\{ \vec{t} \in \R^n :  \sum_{a=1}^d \left| \sum_{j=1}^n \beta_{aj} t_j \right| \leq 1 \right\} \,.
 }
 Let $S^{4n-1}$ be a $(4n-1)$-dimensional sphere with unit radius.  The volume of the tri-Sasaki Einstein manifold satisfies the relation
 \es{VolYVolSphere}{
  \frac{\Vol(Y)}{\Vol(S^{4n-1})} = \frac{n!}{2^n} \Vol({\cal P}) \,.
 }
 \label{cor:polygon}
\end{cor}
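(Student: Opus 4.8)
The plan is to evaluate the integral in the volume formula \eqref{usefulVol} by recognizing its integrand as $e^{-f(\vec{t})}$, where $f(\vec{t}) = \sum_{a=1}^d \left| \sum_{j=1}^n \beta_{aj} t_j \right|$, and to observe that the polytope ${\cal P}$ of \eqref{Polytope} is precisely the unit ball $\{f \leq 1\}$ of this function. Once this is in place, the whole computation reduces to a one-dimensional radial integral together with the homogeneity of $f$.

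The first step is to check that $f$ is a genuine norm on $\R^n$. It is manifestly nonnegative, positively homogeneous of degree one, and convex, being a sum of absolute values of linear functionals. The only point requiring argument is positive-definiteness: if $f(\vec{t}) = 0$ then $\sum_j \beta_{aj} t_j = 0$ for every $a$, i.e. $\vec{t}$ is orthogonal to all columns of $\beta$; but the exactness of \eqref{RSE} makes $\beta$ surjective, so its columns span $\R^n$ and therefore $\vec{t} = 0$. Hence $f$ is a norm, ${\cal P}$ is a compact, centrally symmetric convex body with finite positive volume, and the integral in \eqref{usefulVol} converges absolutely.

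The key computation then converts $\int_{\R^n} e^{-f(\vec{t})}\, d^n t$ into $\Vol({\cal P})$ using only homogeneity. By the layer-cake (Cavalieri) identity and the substitution $s = -\log\lambda$,
\begin{equation*}
\int_{\R^n} e^{-f(\vec{t})}\, d^n t = \int_0^1 \Vol\bigl( \{ \vec{t} : f(\vec{t}) < -\log\lambda \} \bigr)\, d\lambda = \int_0^\infty \Vol\bigl( \{ \vec{t} : f(\vec{t}) < s \} \bigr)\, e^{-s}\, ds.
\end{equation*}
Homogeneity gives $\Vol(\{ f < s \}) = s^n \Vol({\cal P})$, so the integral equals $\Vol({\cal P}) \int_0^\infty s^n e^{-s}\, ds = n!\, \Vol({\cal P})$. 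Substituting this into \eqref{usefulVol} yields $\Vol(Y) = \frac{\pi^{2n}}{(2n-1)! \cdot 2^{n-1}}\, n!\, \Vol({\cal P})$.

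Finally I would insert the standard value $\Vol(S^{4n-1}) = 2\pi^{2n}/(2n-1)!$ and simplify the ratio, which collapses to $\frac{n!}{2^n}\Vol({\cal P})$ as claimed. I do not expect a serious obstacle here: the only genuine subtlety is the definiteness argument establishing that $f$ is a norm (so that ${\cal P}$ is bounded and the integral converges), and a small amount of care is needed in justifying the layer-cake interchange; everything after that is a routine radial integration and bookkeeping of constants.
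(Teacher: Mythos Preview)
Your proof is correct and follows essentially the same route as the paper: both exploit the degree-one homogeneity of $f(\vec t)=\sum_a\bigl|\sum_j\beta_{aj}t_j\bigr|$ to reduce $\int_{\R^n}e^{-f}$ to a one-dimensional radial integral that produces the factor $n!\,\Vol(\mathcal P)$, then plug in $\Vol(S^{4n-1})=2\pi^{2n}/(2n-1)!$. The only cosmetic difference is that the paper carries this out via the explicit change of variables $t_k=\alpha b_k$ with a delta function on $\partial\mathcal P$, whereas you use the layer-cake identity; your additional verification that $f$ is positive-definite (hence $\mathcal P$ is bounded and the integral converges) is a point the paper leaves implicit.
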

\begin{proof}
Introduce the notation $\alpha = \sum_{a=1}^d \left| \sum_{j=1}^n \beta_{aj} t_j \right|$ with $\alpha > 0$ if at least one of the $t_k$ is non-zero, and write $t_k = \alpha b_k$ for some new variables $b_k$.  The $b_k$ are constrained to live on the boundary of ${\cal P}$.  Eq.~\eqref{usefulVol} can be rewritten as
 \es{usefulVolRewrittten}{
  \Vol(Y)= \frac{\pi^{2n}}{(2n-1)! \cdot 2^{n-1}} \int \left( \prod_{k=1}^n d{b_k} \right) \delta \left(\sum_{a=1}^d \left| \sum_{j=1}^n \beta_{aj} b_j \right| -1 \right) \int_0^\infty d \alpha\, \alpha^{n-1} e^{-\alpha} \,.
 }
The volume of ${\cal P}$ can be also written in terms of these variables:
 \es{VolP}{
  \Vol({\cal P}) = \int \left( \prod_{k=1}^n d{b_k} \right) \delta \left(\sum_{a=1}^d \left| \sum_{j=1}^n \beta_{aj} b_j \right| -1 \right) \int_0^1 d \alpha\, \alpha^{n-1} \,.
 }
Performing the integrals in $\alpha$ and comparing the last two formulas, we obtain
 \es{VolYVolP}{
  \Vol(Y) = \frac{\pi^{2n} n!}{(2n-1)! \cdot 2^{n-1}} \Vol({\cal P}) \,,
 }
or, using the fact that the volume of the $(4n-1)$-sphere is $\Vol(S^{4n-1}) = 2 \pi^{2 n} / (2n-1)!$
the desired result follows.
\end{proof}

A final corollary is an explicit result for the volume of the tri-Sasaki Einstein spaces relevant for the gauge theories we discuss below:
\begin{cor}
In the case $n=2$, 
choose $\beta$ such that the two-vectors $\beta_a$ lie in the upper half plane and order them 
such that $\beta_a \wedge \beta_{a+1} >0$.\footnote{We define $\begin{pmatrix}a\\ b \end{pmatrix} \wedge \begin{pmatrix}c\\ d \end{pmatrix} \equiv ad - bc$.} 
The volume of the tri-Sasaki Einstein space $Y$ is
\begin{equation}
\Vol(Y) = \frac{\pi^4}{6} \sum_{a=1}^d \frac{\gamma_{a(a+1)}}{\s_a \s_{a+1}} 
\end{equation}
where we have defined the quantities 
$\gamma_{ab} \equiv | \beta_a \wedge \beta_b|$,  
$
\s_a \equiv \sum_{b=1}^d  \gamma_{ab}
$,
and $\beta_{d+1} \equiv - \beta_1$.
\label{cor:n2ordered}
\end{cor}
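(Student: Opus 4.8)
The plan is to deduce the formula from Corollary~\ref{cor:polygon}, which in the case $n=2$ converts the problem into computing the area of the planar polygon $\mathcal{P}$. Since $\Vol(S^7)=2\pi^4/3!=\pi^4/3$ and $n!/2^n=1/2$ at $n=2$, relation \eqref{VolYVolSphere} reads $\Vol(Y)=\tfrac{\pi^4}{6}\Vol(\mathcal{P})$, so the entire content of the corollary is the planar identity
\[ \Vol(\mathcal{P})=\sum_{a=1}^d \frac{\gamma_{a(a+1)}}{\sigma_a\sigma_{a+1}}. \]
Here $\mathcal{P}=\{\vec{t}\in\R^2: f(\vec{t})\le 1\}$ with $f(\vec{t})=\sum_a|\beta_a\cdot\vec{t}|$. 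I would first observe that $f$ is a norm on $\R^2$ (it is positive away from the origin because the $\beta_a$ span $\R^2$), so $\mathcal{P}$ is a centrally symmetric convex polygon and its area is what we must evaluate.

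Next I would locate the vertices. The function $f$ is linear on each angular sector cut out by the $d$ lines $\beta_a\cdot\vec{t}=0$, so the edges of $\mathcal{P}$ lie in these sectors while its vertices lie on the lines themselves. Writing $\beta_b^\perp=(-\beta_{b2},\beta_{b1})$, on the ray $\vec{t}=\rho\,\beta_b^\perp/|\beta_b|$ one computes $|\beta_a\cdot\vec{t}|=\rho\,\gamma_{ab}/|\beta_b|$, whence $f=\rho\,\sigma_b/|\beta_b|$; setting $f=1$ gives the vertex $\vec{v}_b=\beta_b^\perp/\sigma_b$. By central symmetry the $2d$ vertices are $\pm\vec{v}_1,\dots,\pm\vec{v}_d$, and since the $\beta_a$ are ordered by increasing angle, their rotated copies $\vec{v}_a$ inherit the same cyclic order, so consecutive vertices are $\vec{v}_a,\vec{v}_{a+1}$.

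Then I would apply the shoelace formula $\Vol(\mathcal{P})=\tfrac12\sum_{\mathrm{cyc}}\vec{w}_i\wedge\vec{w}_{i+1}$ to the ordered list $\vec{v}_1,\dots,\vec{v}_d,-\vec{v}_1,\dots,-\vec{v}_d$. Using $(-\vec{v})\wedge(-\vec{w})=\vec{v}\wedge\vec{w}$, the two symmetric halves combine to give $\Vol(\mathcal{P})=\sum_{a=1}^{d-1}\vec{v}_a\wedge\vec{v}_{a+1}-\vec{v}_d\wedge\vec{v}_1$. The convention $\beta_{d+1}=-\beta_1$ is exactly what packages the last term uniformly: it forces $\sigma_{d+1}=\sigma_1$ and $\vec{v}_{d+1}=\beta_{d+1}^\perp/\sigma_{d+1}=-\vec{v}_1$, so that $-\vec{v}_d\wedge\vec{v}_1=\vec{v}_d\wedge\vec{v}_{d+1}$ and hence $\Vol(\mathcal{P})=\sum_{a=1}^d\vec{v}_a\wedge\vec{v}_{a+1}$. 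Since rotation by $\pi/2$ preserves the wedge, $\vec{v}_a\wedge\vec{v}_{a+1}=(\beta_a^\perp\wedge\beta_{a+1}^\perp)/(\sigma_a\sigma_{a+1})=(\beta_a\wedge\beta_{a+1})/(\sigma_a\sigma_{a+1})$, and the ordering $\beta_a\wedge\beta_{a+1}>0$ identifies this with $\gamma_{a(a+1)}/(\sigma_a\sigma_{a+1})$, completing the argument.

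The main obstacle is combinatorial bookkeeping rather than analysis: I must verify that each line $\beta_a\cdot\vec{t}=0$ genuinely contributes a corner of $\mathcal{P}$ (which uses that distinct $\beta_a$ are non-parallel, guaranteed by the strict angular ordering), that no other line crosses the interior of the sector between $\vec{v}_a$ and $\vec{v}_{a+1}$ so that these are truly adjacent, and that the signs in the shoelace sum match the counterclockwise orientation. Folding the boundary term neatly into the sum is precisely where the convention $\beta_{d+1}=-\beta_1$ earns its keep, and that is the step I would write out most carefully.
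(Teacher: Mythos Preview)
Your argument is correct. The paper's primary proof proceeds differently: it works directly with the integral representation \eqref{usefulVol}, splits $\R^2$ into the $2d$ angular sectors cut out by the lines $\beta_a\cdot\vec t=0$, and in each sector changes variables to $u=-\beta_a\cdot\vec t$, $v=\beta_{a+1}\cdot\vec t$ so that the exponential factorizes and the integral is elementary; summing the sectors gives the stated formula. You instead invoke Corollary~\ref{cor:polygon} to reduce to $\Vol(Y)=\tfrac{\pi^4}{6}\Vol(\mathcal P)$, locate the vertices of $\mathcal P$ explicitly as $\vec v_b=\beta_b^\perp/\sigma_b$, and apply the shoelace formula. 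The paper in fact notes at the end of its proof that this alternative route exists (``Given Corollary~\ref{cor:polygon}, there exists an equivalent proof that involves computing the area of~$\mathcal P$''), so your approach is the one the authors allude to but do not write out. What you gain is a transparent geometric picture of $\mathcal P$---its vertices are now known in closed form, which is useful elsewhere---and a clean explanation of why the convention $\beta_{d+1}=-\beta_1$ is natural. What the paper's integral computation gains is brevity: no vertex identification or orientation bookkeeping is needed, and the Jacobian $1/\gamma_{a(a+1)}$ plus two one-dimensional exponential integrals immediately produce each term.
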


Note that the volume of $Y$ is  
independent of the sign of the $\beta_a$ and their order inside $\beta$.
The order of the $\beta_a$ in the corollary is their order around
$\mathbb{R}\mathbb{P}^1$.  (Flipping the sign of any given column of $\beta$ does not change the hyperk\"ahler quotient $\mathbb{H}///N$ because the torus $N$ defined by \eqref{ZSE} is invariant under such a sign flip.  Therefore it is natural to identify $\beta_a$ with $-\beta_a$ and consider the ordering of the $\beta_a =(p_a, q_a)^T$ to be the ordering of $[p_a, q_a] \in \mathbb{R}\mathbb{P}^1$ around $\mathbb{R}\mathbb{P}^1$.)

\begin{proof}[Proof of Corollary \ref{cor:n2ordered}]
 We split the integral into various regions according to
whether $\sum_j \beta_{aj} t_j$ is positive or negative.  There are $2d$
such regions.  Since the integral does not change if we replace each $t_k$ with
$-t_k$, we only need to consider $d$ regions.
The ordering of the $\beta_a$ guarantees that any two consecutive columns determine a region boundary.

Now we choose $a$ and compute the integral in a region bounded by the lines
$u = -\sum_j \beta_{aj} t_j = 0$ and $v = \sum_j \beta_{(a+1)j} t_j = 0$.  The prefactor in the integral is $\pi^4/12$, but we need to multiply by two since there are two
such regions.
We get
\begin{eqnarray}
& & \frac{\pi^4}{6}\frac{1}{ \gamma_{a(a+1)}} \int_0^\infty d{u} \int_0^\infty d{v} \exp \left (-\sum_{b=1}^p \frac{\gamma_{ab} v + \gamma_{(a+1)b} u}{\gamma_{a(a+1)} } \right)
\nonumber
 \\
& = & \frac{\pi^4}{6} \frac{\gamma_{a(a+1)}}{ \left(\sum_b \gamma_{ab}\right)\left(\sum_b \gamma_{(a+1)b} \right)}.
\end{eqnarray}
Summing the regions yields the volume stated in the corollary.
\end{proof}
Given Corollary \ref{cor:polygon}, there exists an equivalent proof that involves computing the area of ${\mathcal P}$ from the definition (\ref{Polytope}).

The authors of \cite{Herzog:2010hf} conjectured a formula for  the volume of these $n=2$ tri-Sasaki Einstein spaces.  Their formula is interesting because it does not rely on an ordering of the $\beta_a$ and makes the permutation symmetry of the columns of $\beta$ manifest.  We have been able to promote this conjecture to a theorem.  As the techniques for the proof are not typical of the main arguments in the paper, we include the proof as Appendix \ref{app:treeproof}.
\begin{TF}
The area of the polytope ${\mathcal P}$ described in (\ref{Polytope}) in the case $n=2$ can be written
\be
\mbox{\rm Area}({\mathcal P}) = 2 \frac{\sum_{(V,E)\in T} \prod_{(a,b)\in E} \gamma_{ab}}{\prod_{a=1}^d \sigma_a} \ .
\ee
where $T$ is the set of all trees (acyclic connected graphs) with nodes $V=\{ 1,2, \ldots, p \}$
and edges
$E = \{(a_1, b_1), (a_2,b_2), \ldots, (a_{p-1}, b_{p-1}) \}$.
\end{TF}

\subsection{Brane Constructions and an $F$-theorem}
\label{BRANE}

Consider the following brane construction in type IIB string theory.  A stack of $N$ coincident  D3-branes spans the $01236$ directions with the 6 direction periodically identified.  Let there be bound states of NS5- and D5-branes.  Denote the number of NS5- and D5-branes in the bound state by $p_a$ and $q_a$ respectively.   These $(p_a, q_a)$-branes intersect the D3-branes at intervals around the circle and span the 012 directions.  Each $(p_a, q_a)$-brane lies at an angle $\theta_a$ in the 37, 48, and 59 planes 
where $\theta_a = \arg (p_a + i q_a)$.
\begin {figure} [!t]
  \centering
\newcommand {\svgwidth} {0.7\textwidth}

\begingroup
  \makeatletter
  \providecommand\color[2][]{%
    \errmessage{(Inkscape) Color is used for the text in Inkscape, but the package 'color.sty' is not loaded}
    \renewcommand\color[2][]{}%
  }
  \providecommand\transparent[1]{%
    \errmessage{(Inkscape) Transparency is used (non-zero) for the text in Inkscape, but the package 'transparent.sty' is not loaded}
    \renewcommand\transparent[1]{}%
  }
  \providecommand\rotatebox[2]{#2}
  \ifx\svgwidth\undefined
    \setlength{\unitlength}{994.046875pt}
  \else
    \setlength{\unitlength}{\svgwidth}
  \fi
  \global\let\svgwidth\undefined
  \makeatother
  \begin{picture}(1,0.48435928)%
    \put(0,0){\includegraphics[width=\unitlength]{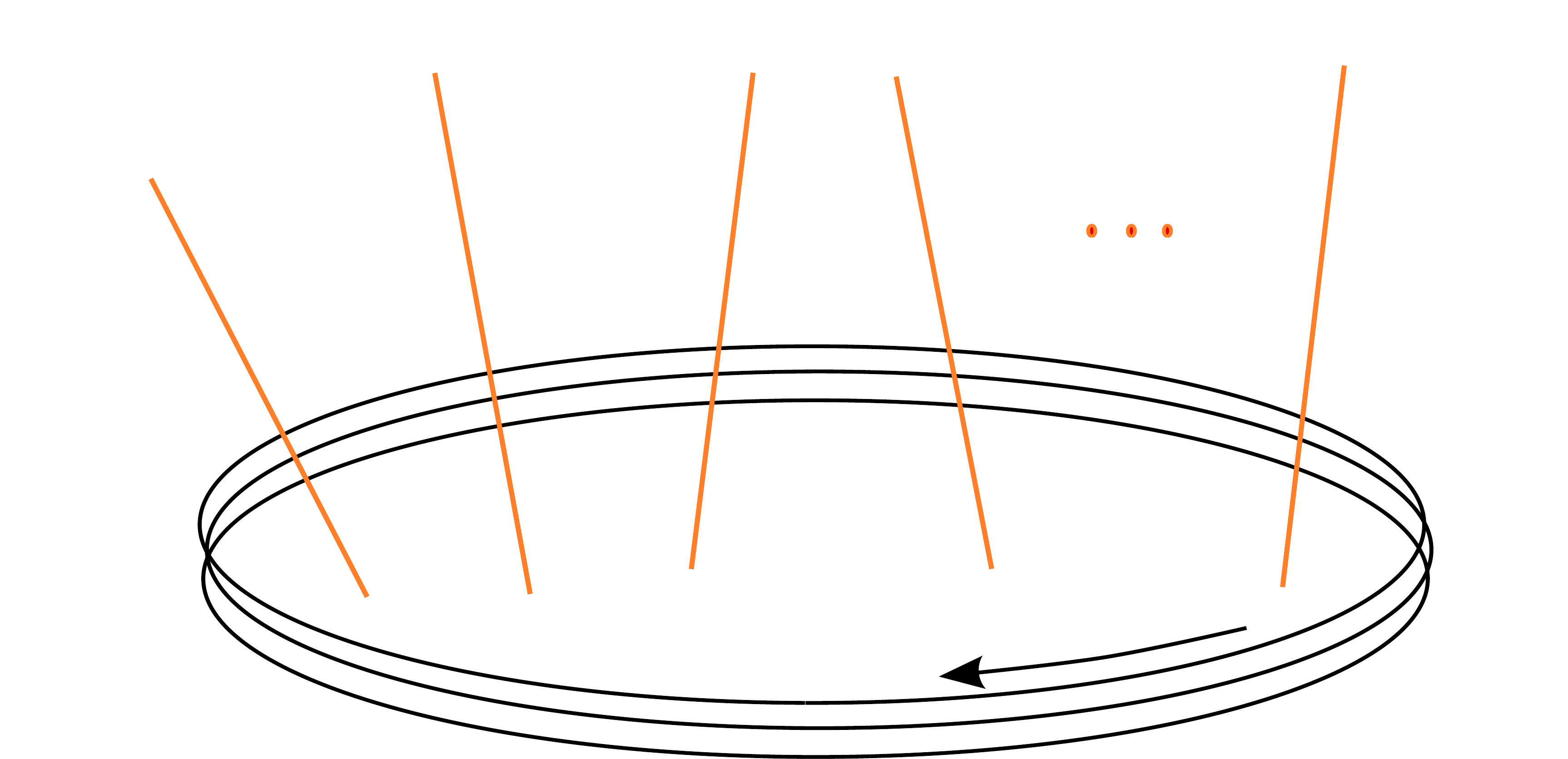}}%
    \put(0.57217857,0.07360149){\color[rgb]{0,0,0}\makebox(0,0)[lb]{\smash{$x_6$}}}%
    \put(-0.00267216,0.38632028){\color[rgb]{0,0,0}\makebox(0,0)[lb]{\smash{$(p_1, q_1)$ 5-brane}}}%
    \put(0.2111723,0.45300297){\color[rgb]{0,0,0}\makebox(0,0)[lb]{\smash{$(p_2, q_2)$}}}%
    \put(0.40432217,0.45760177){\color[rgb]{0,0,0}\makebox(0,0)[lb]{\smash{$(p_3, q_3)$}}}%
    \put(0.54458572,0.45070357){\color[rgb]{0,0,0}\makebox(0,0)[lb]{\smash{$(p_4, q_4)$}}}%
    \put(0.8021189,0.45990118){\color[rgb]{0,0,0}\makebox(0,0)[lb]{\smash{$(p_d, q_d)$}}}%
    \put(0.09657493,0.00689834){\color[rgb]{0,0,0}\makebox(0,0)[lb]{\smash{$N$ D3's}}}%
  \end{picture}%
\endgroup

  \caption {A schematic picture of the brane construction.  The $N$ D3-branes span the $0126$ direction, and the $(p_a, q_a)$ 5-branes span the $012$ directions as well as the lines in the $37$, $48$, and $59$ planes that make angles $\theta_a = \arg (p_a + i q_a)$ with the $3$, $4$, and $5$ axes, respectively.  The three-dimensional ${\cal N}=3$ theories considered in this paper live on the $012$ intersection of these branes.\label{BranePicture}}
\end {figure}
These brane constructions are known to preserve 6 of the 32 supersymmetries of type IIB string theory \cite{Kitao:1998mf,Bergman:1999na}.\footnote{%
Note that a brane with charges $(-p_a, -q_a)$ is an anti-$(p_q,q_a)$-brane
rotated by 180 degrees, which is the same as a $(p_a, q_a)$-brane.
The $(p_a, q_a)$ charge is most naturally defined up to this overall sign.}

This brane construction cannot be described reliably within type IIB supergravity because the dilaton becomes large.  A better description can be obtained after a T-duality along the $6$ direction and a lift to M-theory, where the resulting configuration can be described within 11-d supergravity.   The geometry depends on $N$.  When $N$ is small, the geometry is ${\mathbb R}^{2,1} \times X$ where $X$ is the hyperk\"ahler cone discussed above for $n=2$ \cite{Gauntlett:1997pk}.
In the large $N$ limit, the D3-branes produce a significant back-reaction, and close to them the geometry is $AdS_4 \times Y$ where $Y$ is a tri-Sasaki Einstein space \cite{Jafferis:2008qz}.
In both cases, the charges $(p_a, q_a)=\beta_a^T$ are the columns of $\beta$.

In the case where $p_a = 0$ or 1, this brane construction admits a simple ${\mathcal N}=3$ supersymmetric 2+1 dimensional field theory interpretation.  The $(1,q_a)$-branes, 
$a=1, \ldots, d$, break the D3-branes up into $d$ segments along the circle.  For each segment, we have a $U(N)$ Chern-Simons theory at level $k_a = q_{a+1} - q_a$.  
The $(0,q_i)$-branes, $i = 1, \ldots, n_F$, also intersect the D3-branes.  At each intersection, we have massless strings that join the D3-branes to the $(0,q_i)$-brane that correspond to flavor fields in the fundamental representation.  (See figure~\ref{NecklaceFigure}.)
These gauge theories are described in greater detail in \cite{Imamura:2008nn,Jafferis:2008qz}.

As described in the introduction, it has been conjectured that the logarithm $F$ of the partition function of the Euclidean field theory on $S^3$ serves as a measure of the number of degrees of freedom in the theory, being an analog of the conformal anomaly coefficient $a$ of a 3+1 dimensional field theory.  
While the $a$-theorem is a conjecture that the conformal anomaly $a$ decreases along RG flows, the conjectured $F$-theorem states that $F$ should decrease along RG flows \cite{Jafferis:2011zi}.  
As given in (\ref{MtheoryExpectation}), the AdS/CFT correspondence predicts that in the large $N$ limit, $F \sim 1 / \sqrt{\Vol(Y)}$.  Thus, along RG flows, $\Vol(Y)$ should increase.  

It is remarkable that for the class of theories we consider here, Corollary \ref{cor:RG} confirms this expectation that $\Vol(Y)$ should increase along RG flows.  
The UV of the field theory should correspond to looking at our D3-brane construction from a great distance as only high energy excitations will be able to get far from the branes, while the IR should correspond to getting very close to the D3-branes.  
The simplest realization of such an RG flow is to add a mass term for flavor fields corresponding to a $(0,q_i)$-brane.  Such a mass term corresponds to introducing a small distance between the $(0,q_i)$-brane and the stack of D3-branes.  In the UV, these flavor fields will contribute to $F$, while in the IR, the flavors and corresponding $(0,q_i)$-brane should be absent.  

A slightly more complicated realization involves an 
intersecting $(0,q)$-brane and a $(1,0)$-brane.  
We have a $U(N)$ vector multiplet on each side of the $(1,0)$-brane.  
The $(0,q)$-brane produces
$q$ fundamental flavor fields charged under one $U(N)$ and $q$ anti-fundamentals charged under the other.  Adding real masses of the same sign for each flavor field corresponds to a
``web deformation'' \cite{Bergman:1999na} where close to the D3-branes the $(0,q)$- and $(1,0)$-brane form a $(1,q)$-brane bound state.  Far from the D3-branes, because we have fixed boundary conditions for the $(0,q)$- and $(1,0)$-brane by specifying their angles, the two branes will remain separate.
From a large distance, we will not see that the two 5-branes have combined.  We will only see their asymptotic regions where they remain separated.  Closer up, we will see that the branes have made a bound state.  From the point of view of our volume formula, 
combining these two branes
increases the volume and thus decreases $F$.

In fact, Corollary \ref{cor:RG} and our volume formula put no restriction on the type of $(p,q)$-brane we remove or the type of $(p,q)$- and $(p',q')$-brane we combine to form a $(p+p', q+q')$ bound state.  In all cases, the volume will increase, corresponding to a decrease in $F$.  
Unfortunately, from the field theory perspective it is not clear in general to what these more general types of RG flow correspond.

\section{Field Theory Computation of the Free Energy}
\label{FIELDTHEORY}

\subsection{${\mathcal N}=3$ Matrix Model}
\label{MATRIX}

Consider the ${\mathcal N}=3$ supersymmetric 2+1 dimensional Chern-Simons theories corresponding to the $(p, q)$-brane constructions described above where $p=0$ or 1.
Let there be $d$ $U(N)$ gauge groups at level $k_a = q_{a+1} - q_a$, matter fields $A_a$ and $B_a$ in conjugate bifundamental representations of the $(a-1)$st and $a$th gauge group and $n_a$ pairs of flavor fields transforming in fundamental and anti-fundamental representations of the $a$th gauge group, with $a=1, \ldots, d$ (see figure~\ref{NecklaceFigure}).  
\begin {figure} [!t]
  \centering
\newcommand {\svgwidth} {0.5\textwidth}

\begingroup
  \makeatletter
  \providecommand\color[2][]{%
    \errmessage{(Inkscape) Color is used for the text in Inkscape, but the package 'color.sty' is not loaded}
    \renewcommand\color[2][]{}%
  }
  \providecommand\transparent[1]{%
    \errmessage{(Inkscape) Transparency is used (non-zero) for the text in Inkscape, but the package 'transparent.sty' is not loaded}
    \renewcommand\transparent[1]{}%
  }
  \providecommand\rotatebox[2]{#2}
  \ifx\svgwidth\undefined
    \setlength{\unitlength}{459.09067383pt}
  \else
    \setlength{\unitlength}{\svgwidth}
  \fi
  \global\let\svgwidth\undefined
  \makeatother
  \begin{picture}(1,0.96371043)%
    \put(0,0){\includegraphics[width=\unitlength]{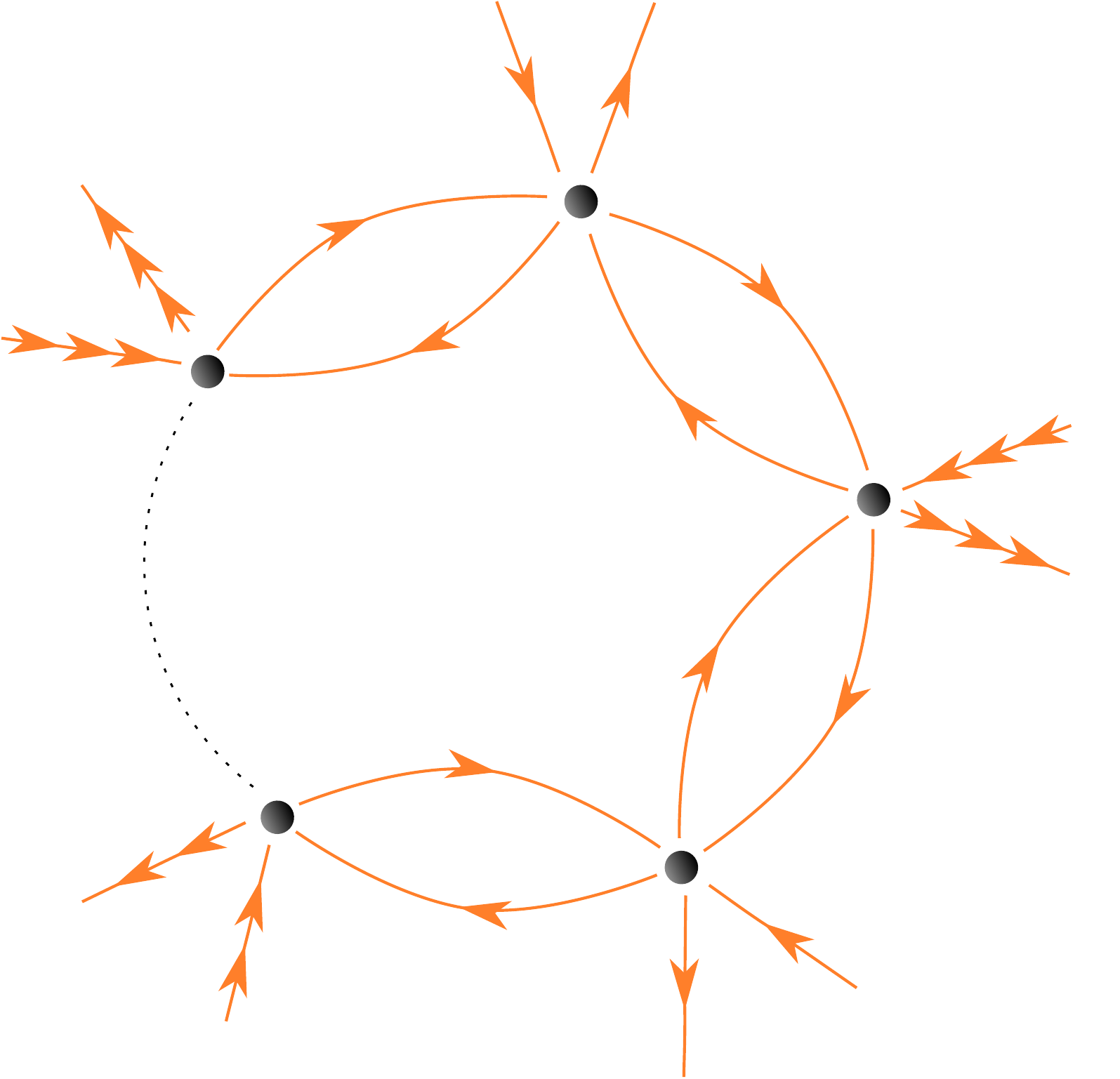}}%
    \put(0.48965262,0.74725023){\color[rgb]{0,0,0}\makebox(0,0)[lt]{\begin{minipage}{0.18172568\unitlength}\raggedright $k_d$\end{minipage}}}%
    \put(0.69981928,0.53588539){\color[rgb]{0,0,0}\makebox(0,0)[lt]{\begin{minipage}{0.10206513\unitlength}\raggedright $k_1$\end{minipage}}}%
    \put(0.55866307,0.27463203){\color[rgb]{0,0,0}\makebox(0,0)[lt]{\begin{minipage}{0.12944842\unitlength}\raggedright $k_2$\end{minipage}}}%
    \put(0.23734499,0.30091838){\color[rgb]{0,0,0}\makebox(0,0)[lt]{\begin{minipage}{0.13442727\unitlength}\raggedright $k_3$\end{minipage}}}%
    \put(0.18327759,0.61644832){\color[rgb]{0,0,0}\makebox(0,0)[lt]{\begin{minipage}{0.1493636\unitlength}\raggedright $k_{d-1}$\\ \end{minipage}}}%
    \put(0.22391427,0.77471426){\color[rgb]{0,0,0}\makebox(0,0)[lb]{\smash{$A_d$}}}%
    \put(0.66453679,0.75230977){\color[rgb]{0,0,0}\makebox(0,0)[lb]{\smash{$A_1$}}}%
    \put(0.76411254,0.28679326){\color[rgb]{0,0,0}\makebox(0,0)[lb]{\smash{$A_2$}}}%
    \put(0.36580965,0.09262058){\color[rgb]{0,0,0}\makebox(0,0)[lb]{\smash{$A_3$}}}%
    \put(0.54255659,0.55813708){\color[rgb]{0,0,0}\makebox(0,0)[lb]{\smash{$B_1$}}}%
    \put(0.56993987,0.39881589){\color[rgb]{0,0,0}\makebox(0,0)[lb]{\smash{$B_2$}}}%
    \put(0.40563995,0.3091978){\color[rgb]{0,0,0}\makebox(0,0)[lb]{\smash{$B_3$}}}%
    \put(0.3434051,0.58552037){\color[rgb]{0,0,0}\makebox(0,0)[lb]{\smash{$B_d$}}}%
    \put(0.12938272,0.13697567){\color[rgb]{0,0,0}\makebox(0,0)[lt]{\begin{minipage}{0.13442727\unitlength}\raggedright $n_3$\end{minipage}}}%
    \put(0.01142396,0.73676607){\color[rgb]{0,0,0}\makebox(0,0)[lt]{\begin{minipage}{0.13442727\unitlength}\raggedright $n_{d-1}$\end{minipage}}}%
    \put(0.48925695,0.94869199){\color[rgb]{0,0,0}\makebox(0,0)[lt]{\begin{minipage}{0.13442727\unitlength}\raggedright $n_d$\end{minipage}}}%
    \put(0.90711091,0.53083801){\color[rgb]{0,0,0}\makebox(0,0)[lt]{\begin{minipage}{0.13442727\unitlength}\raggedright $n_1$\end{minipage}}}%
    \put(0.65719825,0.06300152){\color[rgb]{0,0,0}\makebox(0,0)[lt]{\begin{minipage}{0.13442727\unitlength}\raggedright $n_2$\end{minipage}}}%
  \end{picture}%
\endgroup

  \caption {A necklace quiver gauge theory where the gauge sector consists of $d$ $U(N)$ gauge groups with Chern-Simons coefficients $k_a$.  The matter content consists of the bifundamental fields $A_a$ and $B_a$, as well as $n_a$ pairs of fundamentals and anti-fundamentals transforming under the $a$th gauge group. 
  \label{NecklaceFigure}}
\end {figure}  
As explained in \cite{Kapustin:2009kz}, the partition function for these necklace quivers 
localizes on configurations where the scalars $\sigma_a$ in the vector multiplets are constant Hermitian matrices.  Denoting  the eigenvalues of $\sigma_a$ by $\lambda_{a, i}$, $1 \leq i \leq N$, the partition function takes the form of the eigenvalue integral
\es{NecklaceModel}{
Z = \int \left(\prod_{a,i} d \lambda_{a,i}  \right)
	L_v(\{ \lambda_{a,i} \} ) L_m(\{ \lambda_{a,i} \})
 }
where the vector multiplets contribute
\es{Lv}{
    L_v =  \frac{1}{N!} \prod_{a=1}^d \left( \prod_{i \neq j}
     2  \sinh [\pi (\lambda_{a,i} - \lambda_{a,j}) ]  \right)
   \exp \left(i \pi   \sum_{i}  k_a \lambda_{a,i}^2  \right)
}
and the bifundamental and fundamental matter fields contribute
\es{Lm}{
L_m = 
\prod_{a=1}^d \left( \prod_{i,j} \frac{1}{2 \cosh [\pi (\lambda_{a-1,i} - \lambda_{a,j})]  }\right)
\left( \prod_i \frac{1}{2 \cosh \pi \lambda_{a,i} } \right)^{n_a} \ .
}

We follow the recipe suggested in \cite{Herzog:2010hf} for analyzing this matrix model in the large $N$ limit.  We write $\lambda = N^{1/2} x + i y$ and assume that the density of eigenvalues $\rho(x)$ is the same for each vector multiplet.
To leading order in $N$, the matrix model for the ${\mathcal N}=3$ necklace theories involves extremizing a free energy functional of the type
\be
\label{necklaceF}
 F[\rho, \delta y_a] = \pi N^{3/2}  \int \rho(x) dx\, \left[ n_F | x | +
 2 x  \sum_{a=1}^d
     q_a  \delta y_a(x) +
     \rho(x) \sum_{a=1}^d f( \delta y_a(x)) \right]
      \,,
\ee
where $\delta y_a = y_{a-1} - y_a$, $n_F = \sum_a n_a$, and $f$ is
a periodic function with period one given by
\be
  f(t) = \frac{1}{4} - t^2 \qquad \text{when} \quad -\frac{1}{2} \leq t \leq \frac{1}{2} \,.
 \ee
This free energy should be extremized over the set
\be
  {\cal C} =  \left\{ (\rho, \delta y_a): \int dx\, \rho(x) = 1; \rho(x) \geq 0 \text{ and } \sum_{a=1}^d \rho(x) \delta y_a(x) = 0 \text{ a.e.} \right\} \,,
 \ee
where we think of $\rho(x)$ and $\rho(x) \delta y_a(x)$ as functions defined almost everywhere (a.e.).  To enforce these constraints, we introduce the Lagrange multipliers $\mu$ and $\nu(x)$:
\be
    \tilde F[\rho, \delta y_a] = F[\rho, \delta y_a] - 2 \pi N^{3/2} \left[ \mu \left( \int dx\, \rho(x) - 1 \right)
     +  \int dx\, \rho(x) \nu(x) \sum_{a = 1}^d \delta y_a(x) \right] \ .
\ee
The equations of motion that follow from this action are
\begin {subequations}
\label{rhozaeqs}
\begin{align}
\sum_{a=1}^d  \left[  f(\delta y_a(x)) \rho(x) + \left(q_a x-\nu(x)\right) \, \delta y_a(x)  \right] &= \mu - \frac{1}{2}  n_F |x|\,,
\label{rhoeq} \\
\frac{1}{2} f'(\delta y_a(x)) \rho(x)  + q_a x &= \nu(x) \,.
\label{zaeq}
\end{align}
\end {subequations}

The solution of these equations and the constraint $\sum_{a = 1}^d \delta y_a = 0$ is 
 \es{Soln}{
  \rho(x) &= s_L(x) - s_S(x) \,, \qquad 
   \nu(x) = - \frac{1}{2} \left[ s_L(x) + s_S(x)\right]  \,, \\
   \delta y_a(x) &= \frac{1}{2} \frac{\abs{s_L(x) + q_a x }  
     -\abs{s_S(x) + q_a x} }{s_L(x) - s_S(x)}  \,,
 }
where we denoted by $s_L(x)$ and $s_S(x)$ the two solutions of the equation
 \es{sEqSimp}{
 \frac{1}{2} n_F |x| +  \frac{1}{2} \sum_{a = 1}^d \abs{s(x) + q_a x} = \mu  \,,
 }
with $s_L(x) \geq s_S(x)$.
Note that the set of $s$ and $x$ satisfying (\ref{sEqSimp}) defines a polygon.  In fact, it defines the polygon ${\mathcal P}$ of (\ref{Polytope}) but rescaled by a factor of $2\mu $.

We show in Appendix~\ref{VIRIAL} quite generally that in the continuum limit  the extremized value of the free energy $F$ is proportional to the Lagrange multipler $\mu$:
 \es{FmuRelationText}{
  F = \frac{4 \pi N^{3/2}}{3} \mu \,.
 }
Thus to determine $F$, it suffices to find $\mu$.

By definition, the density $\rho(x)$ should integrate to one.
The solution (\ref{Soln}) demonstrates that the density $\rho(x)$ is proportional to the length of a slice through ${\mathcal P}$ 
at constant $x$.  Thus integrating $\rho(x)$ over $x$ should yield a quantity proportional to the area of the polygon.
We obtain
 \es{rhoNorm}{
  1 = \int dx\, \rho(x) = 4 \mu^2  \Vol({\cal P}) \,.
 }
 Assembling (\ref{FmuRelationText}) and (\ref{rhoNorm}) yields
 $
 F =2 \pi N^{3/2} / 3 \sqrt{\Vol({\cal P})}
$.
From Corollary \ref{cor:polygon} and in particular (\ref{VolYVolP}), we recover our AdS/CFT prediction (\ref{MtheoryExpectation}). 

\subsection{Operator Counting and the Matrix Model}

In this section, we relate the matrix model 
quantities $\rho(x)$ and $\rho(x) \delta y_a(x)$ to numbers of operators in the chiral ring that don't vanish on the geometric branch of the moduli space of the $\mathcal{N}=3$ necklace quiver in the abelian case $N=1$. 
To that end, we first characterize the chiral ring,
 defined to be gauge-invariant combinations 
 of the bifundamental fields and monopole operators modulo superpotential relations.  
At arbitrary $N$, the superpotential for these $\mathcal{N} =3$ necklace quivers when $n_F=0$ is
\be
W = \sum_{a=1}^d \frac{1}{k_a} \operatorname{Tr}(B_{a+1} A_{a+1} - A_{a}B_{a})^2 \ ,
\ee
where $k_a = q_{a+1}-q_a$.  When $N=1$, this superpotential gives rise to $d-2$ linearly independent relations on the geometric branch of the moduli space
\be
(k_a + k_{a-1}) A_a B_a = k_{a-1} A_{a+1} B_{a+1} + k_{a} A_{a-1} B_{a-1} \ .
\ee
(As pointed out by \cite{Jafferis:2008qz}, these relations are one set of moment map constraints in the hyperk\"ahler quotient discussed previously.)  There also exists a monopole operator $T$ and anti-monopole operator $\tilde T$ that create one and minus one units of flux, respectively, through each gauge group.
These monopole operators satisfy the quantum relation $T \tilde T = 1$ \cite{Jafferis:2009th, Benini:2009qs}.
The $A_a$ and $B_a$ fields have $R$-charge 1/2, guaranteeing that the $R$-charge of the superpotential is two.  We will take the monopole operators to have zero $R$-charge.
Gauge invariance for this $U(1)^d$ gauge theory means that the total $U(1)^d$ charge of a gauge-invariant operator constructed from $A_a$, $B_a$, and monopole fields will vanish.    Let $A_a$ ($B_a$) have charge $+1(-1)$ under gauge group $a$ and charge $-1(+1)$ under gauge group $a-1$.  The monopole operators have gauge charges $\pm k_a$ under the $a$th gauge group. To write a gauge-invariant operator in a compact form, it is convenient to define the operators
\be
C_a^{m q_a + s} \equiv
\begin{cases}
A_a^{mq_a + s} \  & \mbox{if  } mq_a +s > 0 \  \\
B_a^{-m q_a - s} \ & \mbox{if  } mq_a + s < 0 \ 
\end{cases}
\; ; \qquad
U^m \equiv \begin{cases}
T^m \ & \mbox{if  } x > 0 \\
\tilde T^{-m} \ & \mbox{if  } x<0
\end{cases} 
\ .
\ee
The gauge-invariant operators are
\be
\mathcal{O}(m,s,i,j)  = U^m C_1^{m q_1 + s} C_2^{m q_2 + s} \cdots C_d^{m q_d + s} (A_1 B_1)^i (A_2 B_2)^j
\ .
\ee
Note we have used the superpotential relations to eliminate all but two of the $(A_i B_i)$ factors that could potentially appear in a gauge-invariant operator.

Let us consider the set of operators of the form $\mathcal{O}(m,s,0,0)$ 
with $R$-charge less than $r$.  
The number of these operators is equal to the number of lattice points inside the polygon
\be
\mathcal{P}_r = \left\{ (m,s) \in \mathbb{R}^2 : \frac{1}{2} \sum_{a=1}^d | s + q_a m  | < r \right\} \ .
\ee
In the large $r$ limit, the number of these lattice points 
is well approximated by $\operatorname{Area}(\mathcal{P}_r)$. 
The polygon in Section \ref{VOLUME} and this polygon are related via
$\mathcal{P}_{1/2} = \mathcal{P}$.  
Corollary \ref{cor:polygon} established that $\Area(\mathcal{P})$
and $\Vol(Y)$ are proportional.  As $\Area(\mathcal{P}_r) {=} 4 r^2 \Area(\mathcal{P})$, 
we have an additional relation between the number of a certain type of operator and
$\Vol(Y)$.  

The relation between $\Area(\mathcal{P})$ and an operator counting problem reveals an additional relationship between the eigenvalue density $\rho(x)$ and the chiral ring.  
We claim that in the large $r$ limit the number of operators $\mathcal{O}(m,s,0,0)$ of $R$-charge less than $r$ and monopole charge between $m=x r / \mu$ and $m+dm = (x + dx) r / \mu$ is
\be
 \frac{r^2}{\mu^2} \rho(x) dx \ .
\ee
From (\ref{Soln}), the quantity $\rho(x) dx$ in the matrix model corresponds to the area of a constant $x$ strip of $\mathcal{P}_{\mu}$ of height $s_L(x) - s_S(x)$ and width $dx$.  
Roughly speaking, 
the operator counting gives $\rho(x)$ a new interpretation as the number of operators
$\mathcal{O}(x,s,0,0)$
of $R$-charge less than $ \mu$ and monopole charge bounded between $x$ and $x+dx$.
As $\mu$ is of order one in the matrix model, we should be more careful and consider first
$\mathcal{P}_r$ for some large $r$, giving rise to the factors of $r/\mu$ in the claim.

There is a subtle relation between $\delta y_a$ and the chiral ring where $A_a$ or $B_a$ is set to zero.  
In the large $r$ limit, we claim that 
\be
 \frac{r^2}{\mu^2} \left(\delta y_a(x) + \frac 12 \right) \rho(x) dx
 \label{dycountingeq}
\ee
counts the number of operators of the form $\mathcal{O}(m,s,0,0)$ with $R$-charge less than $r$ and with monopole charge between $m = x r / \mu$ and $m+dm = (x + dx) r / \mu$ and with $B_a=0$.
Flipping the sign of $\delta y_a(x)$ yields an equivalent expression for operators with $A_a=0$.  Note that
\begin{eqnarray}
\rho(x) \left(\delta y_a(x) + \frac 12 \right) &=& 
\begin{cases}
0 &  \ q_a x < -s_L(x)  \,, \\
s_L(x) + q_a x &  \ -s_L(x) < q_a x < -s_S(x)\,,  \\
 s_L(x) - s_S(x)  &  \ -s_S(x) < q_a x \,.
\end{cases}
\end{eqnarray}
For simplicity, let us first assume that $\mu$ is large and count the operators with $R$-charge
less than $\mu$.  We get the correct counting when $q_a x < -s_L(x)$ and when $-s_S(x) < q_a x$ because 
the region $q_a x + s_S(x) > 0$ corresponds to a portion of the polygon where the
$\mathcal{O}(x,s,0,0)$ contain no $B_a$ while the region $q_a x + s_L(x) < 0$ corresponds to operators that contain no $A_a$.  In the central region, the operators that contain no $B_a$ satisfy the constraint $s_L(x) > s > -q_a x$.  Because there is one operator per lattice point, the number of operators that contain no $B_a$ is proportional to the difference
$s_L(x) - (-q_a x)$.  
Similarly $\rho(x) (-\delta y_a(x) + 1/2)$ will count the number of operators with no $A_a$.
Again, since $\mu$ is of order one in the matrix model, we should rescale our results for a polygon $\mathcal{P}_r$ in the large $r$ limit, yielding the extra factors of $r/\mu$ in the claim  (\ref{dycountingeq}) for $\delta y_a$.

For a general supersymmetric gauge theory, we will not be able to make a clean separation between all operators in the chiral ring and operators of this special form $\mathcal{O}(m,s,0,0)$.  It is thus useful to reformulate these statements about $\rho(x)$ and $\delta y_a$ in terms of all chiral operators. 
Let us introduce the function $\psi(r,m)$ which counts the number of operators with $R$-charge less than $r$ and monopole charge less than $m$.  We claim that
\be
\left. \frac{\partial^3 \psi}{\partial r^2  \partial m} \right|_{m=rx/ \mu} = \frac{r}{\mu} \rho(x) \ .
\label{psirhorel}
\ee
Note that in the large $r$ limit, $(\partial^2 \psi / \partial r \partial m) dr \, dm$ can be interpreted as the number of operators of $R$-charge between $r$ and $r+dr$ and monopole charge between $m$ and $m+dm$.
Given an operator $\mathcal{O}(m,s,0,0)$ of $R$-charge
$\frac{1}{2} \sum_a |s + q_a m| = r_0 < r$, we can form an operator of $R$-charge equal to $r$ by multiplying $\mathcal{O}(m,s,0,0)$ by a factor of $(A_1B_1)^j (A_2 B_2)^{r-r_0-j}$.  There are precisely $r-r_0+1$ ways of forming such a factor (assuming $r-r_0$ is an integer).  This multiplicity $r-r_0+1$ associated to a lattice point $(m,s)$ of the polygon can be interpreted as the total number of operators of fixed $m$ and $s$ with $R$-charge equal to $r$.  The difference in the number of operators with $R$-charge $r+1$ and $r$ integrated over a strip at constant $m$ now has the dual interpretation as $r \rho(\mu m / r) / \mu$ or $\partial^3 \psi / \partial r^2  \partial m$ (in the large $r$ limit).

As for $\delta y_a(x)$, let us introduce $\psi_{A_a}(r,m)$ and $\psi_{B_a}(r,m)$ as the number of operators with $R$-charge less than $r$ and monopole charge less than $m$, with $A_a=0$ and $B_a=0$ respectively. We claim that 
\begin{eqnarray}
\label{psiAdyrel}
\left. \frac{\partial^2 \psi_{A_a} }{\partial r \partial m}\right|_{m=rx/ \mu} &=& \frac{r}{\mu} \rho(x) \left(-\delta y_a(x) + \frac 12 \right)\ , \\
\label{psiBdyrel}
\left. \frac{\partial^2 \psi_{B_a} }{\partial r \partial m}\right|_{m=rx/ \mu}  &=& \frac{r}{\mu} \rho(x) \left(\delta y_a(x) +\frac 12\right) \ .
\end{eqnarray}
Note that $(\partial^2 \psi_{B_a} / \partial r \partial m) dr \, dm$ can be interpreted as the number of operators of $R$-charge between $r$ and $r+dr$, monopole charge between $m$ and $m+dm$, and no $B_a$ operators, in the large $r$ limit.
Given an operator $\mathcal{O}(m,s,0,0)$ of $R$-charge $r_0<r$ that does not involve the field $B_a$, we can form a unique 
operator of $R$-charge equal to $r$ by multiplying $\mathcal{O}(m,s,0,0)$ by $(A_b B_b)^{r-r_0}$ where $b \neq a$ (provided $r-r_0$ is an integer).  Thus, eq.\ (\ref{dycountingeq}) is counting the number of operators in the chiral ring with $R$-charge between $r$ and $r+dr$, magnetic charge between $m=r x/\mu$ and $m+dm = r(x+dx)/\mu$, and no $B_a$ operators.  There is a parallel argument for $\psi_{A_a}$.

\subsection{Operator Counting and Volumes}

Given the close relation between $\rho(x)$ and $\Vol(Y)$, it is not surprising that there is also a close connection between $\rho(x)$ and numbers of operators in the chiral ring.  
Motivated by Weyl's Law for eigenfunctions of a Laplacian on a curved manifold, the authors of \cite{Bergman:2001qi} noticed that the number of holomorphic functions on a certain class of Calabi-Yau cones could be related to the volume of the Sasaki-Einstein manifold base.  
Their result holds for a Calabi-Yau cone which is a $\mathbb{C}^*$-fibration over a variety in weighted projective space with a K\"ahler-Einstein metric.  This special case was later generalized to any complex K\"ahler cone by Martelli, Sparks, and Yau \cite{Martelli:2006yb}.
Note that the set of holomorphic functions on the cone is precisely the chiral ring of the $N=1$ gauge theory and that the ring has a natural grading from the $R$-charge.  
The relation between the number of holomorphic functions and the $\Vol(Y)$ is
\be
\Vol(Y) = \frac{\pi^n n}{2^{n-1}} \lim_{r\to \infty} \frac{1}{r^4}\lim_{m \to \infty} \psi(r,m) \ ,
\label{MSYresult}
\ee
where $\dim(Y) = 2n-1$. 
In fact the result (\ref{MSYresult}) helps to explain not only the relation (\ref{psirhorel}) between $\rho(x)$ and $\psi(r,m)$ but also the relations (\ref{psiAdyrel}) and (\ref{psiBdyrel}) between $\delta y_a(x)$ and restrictions of the chiral ring to rings where $A_a=0$ or $B_a=0$. 
The subspace $A_a=0$ of the Calabi-Yau cone is also a complex cone with K\"ahler structure.  The level surfaces of this complex cone will be Sasaki manifolds which satisfy (\ref{MSYresult}) with $n=3$.  
The identifications (\ref{psirhorel}), (\ref{psiAdyrel}), and (\ref{psiBdyrel}) along with (\ref{MSYresult}) imply that
\begin{eqnarray}
\label{rhoVolrel}
\int \rho(x) dx &=& \frac{24 \mu^2}{\pi^4} \Vol(Y) \ , \\
\int \rho(x) \left(-\delta y_a(x) + \frac 12 \right) dx &=&  \frac{4 \mu^2}{\pi^3} \Vol(Y_{A_a}) \ , \\
\int \rho(x) \left(\delta y_a(x) + \frac 12  \right) dx &=&  \frac{4 \mu^2}{\pi^3} \Vol(Y_{B_a}) \ .
\end{eqnarray}
The relation (\ref{rhoVolrel}) we derived already, but the second two relations are seemingly new.  
From the free energy functional (\ref{necklaceF}), it is clear that $\delta y_a(x)$ is an odd function of $x$, and thus for these necklace quivers the integral $\int \rho(x) \delta y_a(x) dx$ vanishes trivially.  The remaining integral yields the result
\be
\Vol(Y_{A_a}) = \Vol(Y_{B_a}) = \frac{3}{\pi} \Vol(Y) \ ,
\ee  
a result Yee found by other means \cite{Yee:2006ba}.

A field theoretic interpretation of these five dimensional cycles was provided by \cite{Gubser:1998fp,Berenstein:2002ke}.  In an $AdS_4 \times Y$ solution of eleven-dimensional supergravity, an M5-brane wrapping such a cycle in $Y$ looks like a point particle in $AdS_4$ with a mass proportional to the volume of the cycle times the tension of the five-brane.  The AdS/CFT dictionary provides a relationship between the mass of the particle and its conformal dimension.  As the wrapped five-brane is supersymmetric, the conformal dimension can be related to the $R$-charge of the corresponding operator that creates the state.  If in the geometry the five-cycle corresponds to setting $A_a=0$, the corresponding baryonic-like operator should involve an anti-symmetric product of $N$ copies $A_a$.  For our purposes, the essential point is a relation between $\Vol(Y_{A_a})$ and the $R$-charge of $A_a$:
\be
R(A_a) = \frac{\pi}{6} \frac{\Vol(Y_{A_a})}{\Vol(Y)} = \frac{1}{2} \ .
\ee
This point suggests a way of generalizing  (\ref{psiAdyrel}) and (\ref{psiBdyrel}) to an arbitrary quiver gauge theory.  We should replace the $1/2$ with the $R$-charge of the corresponding bifundamental field $X_{ab}$ with charge $+1$ under gauge group $b$ and charge $-1$ under gauge group $a$:
\be
\left. \frac{\partial^2 \psi_{X_{ab}}}{\partial r \partial m} \right|_{m = rx/\mu}
=
\frac{r}{\mu} \rho(x) \left[y_b(x) - y_a(x) + R(X_{ab}) \right] \ .
\ee

\section{The Matrix Model for $(p, q)$-Branes}

As discussed in section~\ref{BRANE}, the free energy for a system of $d$ $(p_a, q_a)$ five-branes can be computed on the gravity side by combining the M-theory prediction \eqref{MtheoryExpectation} with the eq.~\eqref{VolYVolSphere}, where $n=2$ and $(p_a, q_a) = \beta_a^T$.  More explicitly, the polygon ${\cal P}$ takes the form
 \es{Polytopepq}{
  {\cal P} = \left\{(x,s) \in \R^2: \sum_{a=1}^d \abs{p_a s + q_a x} \leq 1 \right\} \,,
 }
and the volume of ${\cal P}$ is related to $F$ through $F =2 \pi N^{3/2} / 3 \sqrt{\Vol({\cal P})}$.  In the previous section we were able to reproduce this formula from the field theory side in the cases where $p_a = 0$ or $p_a = 1$ for each $a$, as these were the cases where a simple Lagrangian description of the field theory led \cite{Kapustin:2009kz} to an expression for $F$ in terms of a matrix integral. For $p_a > 1$, as mentioned in the introduction, a Lagrangian description would involve the coupling to the $T(U(N))$ theory described in \cite{Gaiotto:2008ak}.  We will now use our large $N$ intuition to figure out the matrix model at finite $N$.

As a first step towards obtaining such a matrix model, we note that at large $N$ one can obtain the correct value for $F$ by extremizing the free energy functional:
\be
F[\rho, \delta y_a] = \pi N^{3/2} \int \rho(x) dx \left[
2 x \sum_{a=1}^d \frac{q_a}{p_a} \delta y_a(x) + \rho(x) \sum_{a=1}^d p_a 
f(\delta y_a(x) / p_a) \right] \ .
\label{necklaceFgenp}
\ee
The saddlepoint is a generalization of the earlier eigenvalue distribution (\ref{Soln})
for the $(1, q_a)$-branes: 
\es{Solnpq}{
  \rho(x) &= s_L(x) - s_S(x) \,, \qquad 
   \nu(x) = - \frac{1}{2} \left[ s_L(x) + s_S(x)\right]  \,, \\
   \delta y_a(x) &= \frac{1}{2} \frac{\abs{p_a s_L(x) + q_a x }  
     -\abs{p_a s_S(x) + q_a x} }{s_L(x) - s_S(x)}  \,,
 }
 where $s_L(x) > s_S(x)$ are the two solutions of 
 $
\frac{1}{2} \sum_{a=1}^d \abs{p_a s + q_a x} = \mu 
$.

The equation (\ref{necklaceFgenp}) seems ill behaved in the limit $p_a \to 0$, but in fact it is not.  
What happens is that $\delta y_a(x)$ will saturate very quickly to $\pm p_a/2$ as we move away from $x=0$.  Thus the function $f(\delta y_a/ p_a)$ will vanish, and 
$2 x q_a  \delta y_a / p_a$ can be replaced with $q_a |x|$.  Before, we identified the number of flavors with the sum 
$
n_F = \sum_{p_a = 0} q_a
$, and in this way we see how a term of the form $n_F |x|$ will appear in the free energy
(\ref{necklaceFgenp}).

To give the reader a better sense of how $SL(2, \mathbb{Z})$ acts on the polygon, we show $\mathcal{P}$ for ABJM theory and some of its $SL(2, \mathbb{Z})$ transforms in figure \ref{fig:poly}.
While all three polygons have the same volume (as they must given Corollary \ref{cor:n2ordered}),
the eigenvalue distributions can look quite different.  In figure \ref{fig:poly}a, 
$\rho(x)$ is a constant function for $-1/k< x < 1/k$.  In figure \ref{fig:poly}b, $\rho(x)$ has two piecewise linear regions for $-1 < x < 1$.  Finally, in figure \ref{fig:poly}c, $\rho(x)$ has one constant region and two linear regions in the interval $-1 + 1/k < x < 1 - 1/ k$.

\begin{figure}
a) 
\includegraphics[width=4.5cm]{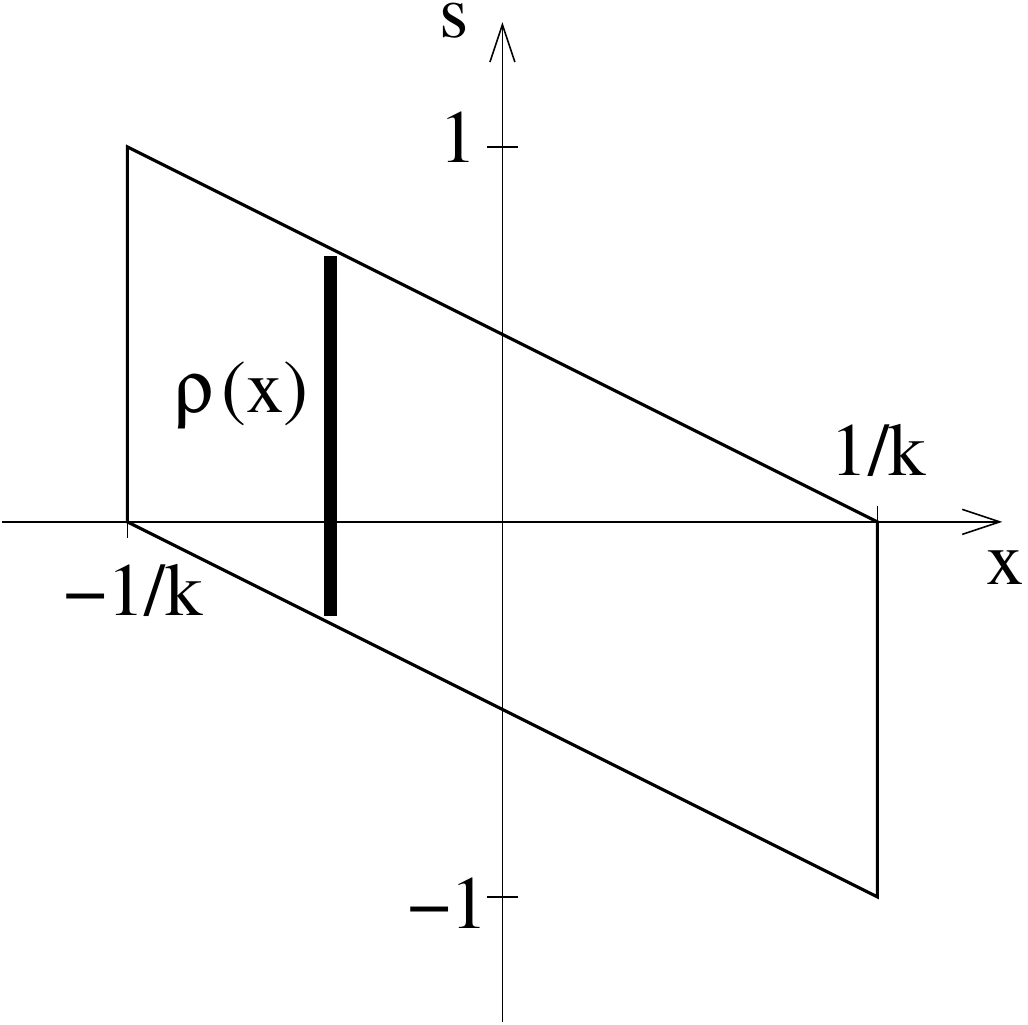}
b)
\includegraphics[width=4.5cm]{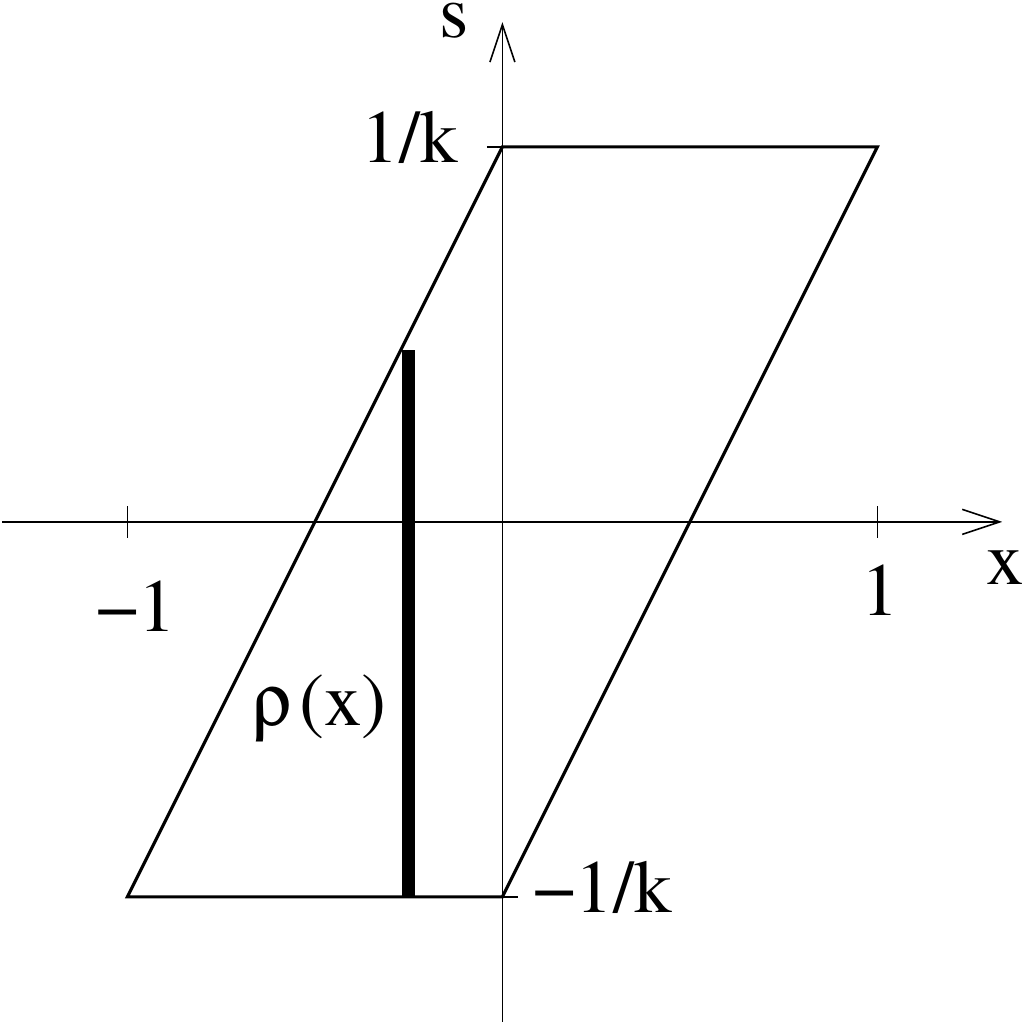}
c)
\includegraphics[width=4.5cm]{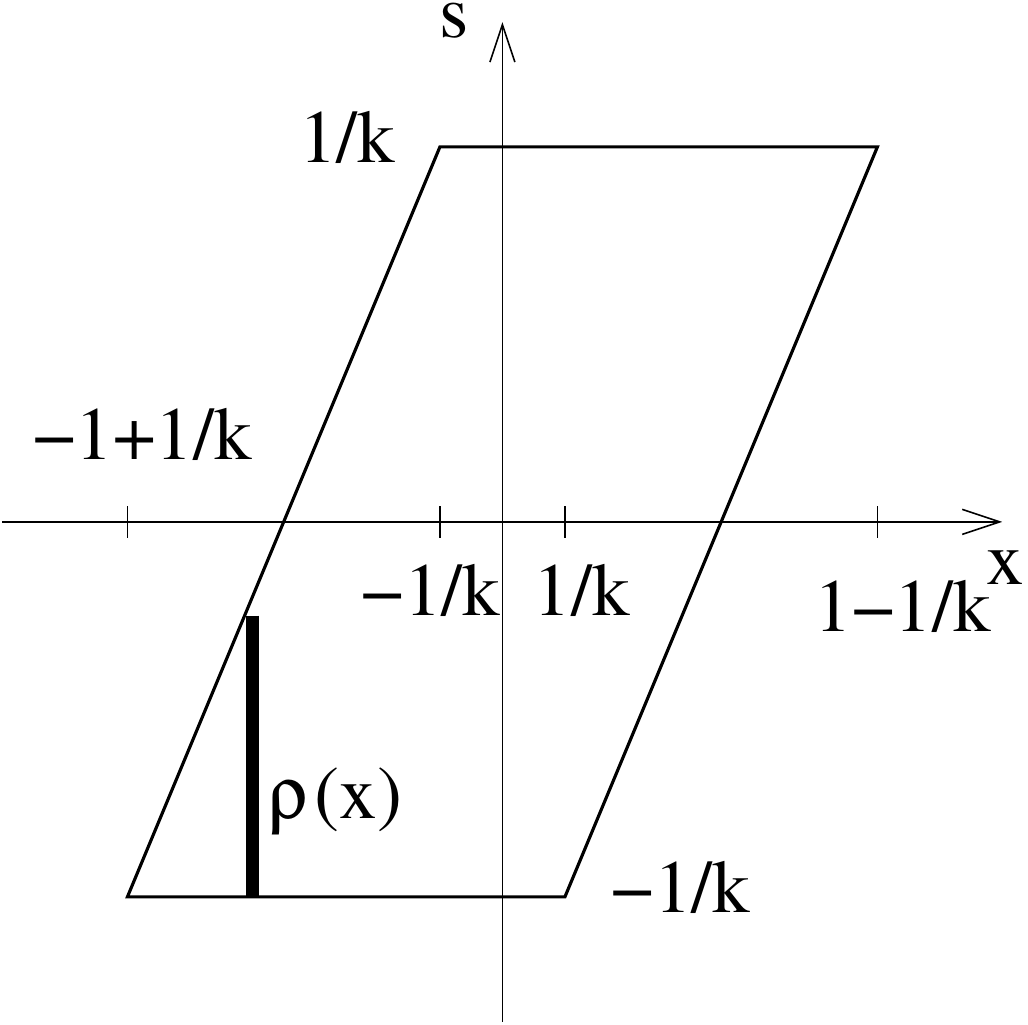}
\caption{
a) The polygon for ABJM theory which can be built from a $(1,0)$ and $(1,k)$ brane;
b) the S-dual configuration involving a $(0,1)$ and $(-k,1)$ brane;
c) an $SL(2,\mathbb{Z})$ transform to a $(1,1)$ and $(1-k, 1)$ brane.
}
\label{fig:poly}
\end{figure}

\subsection{$(p,q)$-branes at finite $N$}

One can go a little further and conjecture a finite $N$ analog of the matrix integral \eqref{NecklaceModel} that reduces to \eqref{necklaceFgenp} in the large $N$ limit.  If we move the $\sinh$ and $N!$ factors from \eqref{Lv} to \eqref{Lm}, then
the contribution from a $(1,q_a)$-brane to the partition function is given by
\es{oneqbrane}{
&\frac{1}{N!} \exp \left[\pi i q_a \left(\sum_i \lambda_{a-1,i}^2 - \sum_i \lambda_{a,i}^2 \right) \right] \\ 
&\qquad \qquad \times \frac{\prod_{i<j} 2 \sinh \pi (\lambda_{a-1,i}-\lambda_{a-1,j}) \prod_{i<j} 2 \sinh \pi (\lambda_{a,i}-\lambda_{a,j})}{\prod_{i,j} 2 \cosh \pi(\lambda_{a-1,i} - \lambda_{a,j})} \,.
}
We argue that the generalization to $(p_a,q_a)$ branes is given by
\be \label{eq:partitionpq}
\begin{split}
L_{(p_a, q_a)}(\lambda_{a-1}, \lambda_a) \equiv & \frac{1}{N!} |p_a|^{-N} \exp \left[\pi i \frac{q_a}{p_a} \left( \sum_i \lambda_{a-1,i}^2 - \sum_i \lambda_{a,i}^2 \right) \right] \\
& \times \frac{\prod_{i<j} 2 \sinh \frac{\pi}{p_a} (\lambda_{a-1,i}-\lambda_{a-1,j})
\prod_{i<j} 2 \sinh \frac{\pi}{p_a} (\lambda_{a,i}-\lambda_{a,j})}{\prod_{i,j} 2 \cosh \frac{\pi}{p_a}(\lambda_{a-1,i} - \lambda_{a,j})} \,,
\end{split}
\ee
where this formula is correct only when $(p_a, q_a)$ are relatively prime.  An $(np_a, nq_a)$ five-brane, with $(p_a, q_a)$ relatively prime, should be thought of as $n$ $(p_a, q_a)$ five-branes.
Eq.~\eqref{eq:partitionpq} is based largely on the structure of (\ref{necklaceFgenp}).  The term involving the hyperbolic cosine in (\ref{oneqbrane}) gives rise to the second term in (\ref{necklaceFgenp}) in the case $p_a=1$.  As the imaginary parts of the eigenvalues appear with a factor of $p_a$ in (\ref{necklaceFgenp}), 
at finite $N$, we should divide the eigenvalue 
differences $\lambda_{a-1,i} - \lambda_{a,j}$ by $p_a$.  
The exponential term in (\ref{oneqbrane}) gives rise to the first term in 
(\ref{necklaceFgenp}) in the case $p_a=1$. As a result at finite $N$ 
we should replace the coefficient $q_a$ with $q_a/p_a$ in (\ref{oneqbrane}).  
This replacement suggests that the field theory on $(p_a,q_a)$-branes in some sense can be thought of as having Chern-Simons couplings $\frac{1}{4 \pi} (q_{a+1}/p_{a+1} - q_{a}/p_{a})  \int \tr A_a \wedge dA_a$.  (Such an observation was also made in \cite{Kitao:1998mf, Gaiotto:2008ak}.)
Recall that the factor $\exp[\pi i (q_{a+1} - q_{a}) \sum_i \lambda_{a, i}^2 ]$ in the original matrix model comes from the classical contribution of the Chern-Simons term $\frac{1}{4 \pi} (q_{a+1} - q_{a})  \int \tr A_a \wedge dA_a$ and its supersymmetric completion.

The remaining factors of $p_a$ are required for (\ref{eq:partitionpq}) to be invariant under S-duality as we will now see.  As an added bonus, by studying the action of S-duality, we will be able to deduce the partition function for the $T(U(N))$ theory of \cite{Gaiotto:2008ak}.

Our $(p,q)$-brane construction exists in type IIB string theory which is well known to be invariant under the action of $SL(2, \mathbb{Z})$.  One of the generators of $SL(2, \mathbb{Z})$ is S-duality which we define to map a $(p,q)$-brane to a $(-q,p)$-brane.  
The work of Gaiotto and Witten \cite{Gaiotto:2008ak} suggests that we also should be able to realize S-duality locally, on one $(p,q)$-brane at a time.
The fact that S-duality squares to minus one suggests that we may be able to realize 
it as a Fourier transform acting on 
(\ref{eq:partitionpq}).  

For simplicity, we will restrict to the case where the ranks of the gauge groups are equal to $N$. 
Before introducing the Fourier transform, we make use of the identity
\begin{equation} \label{DeterminantEqual}
\begin{split}
& \frac{\prod_{i<j} \sinh(x_j - x_i) \prod_{i<j} \sinh(y_j-y_i)}{\prod_{i,j} \cosh(x_i-y_j)} \\ = & \det \left(
\begin{array}{cccc}
\sech(x_1-y_1) & \sech(x_2-y_1) &
\dots & \sech(x_n-y_1) \\
\sech(x_1-y_2) & \sech(x_2-y_2) &
\dots & \sech(x_n-y_2) \\
\vdots & \vdots & dots & \dots \\
\sech(x_1-y_N) & \sech(x_2-y_N) &
\dots & \sech(x_N-y_N)
\end{array}
\right)
\end{split}
\end{equation}
previously considered in \cite{Kapustin:2010xq}.
Given this identity, we can write (\ref{eq:partitionpq}) as a sum over permutations
\begin{equation} \label{pqFactor}
L_{(p, q)}(\lambda, \sigma ) = \frac{1}{N!} \sum_{\rho \in S_N} (-1)^\rho \frac{1}{|p|} 
\prod_j 
\exp\left[ \pi i \frac{q}{p} (\lambda_{j}^2 - \sigma_{\rho(j)}^2 ) \right] \frac{1}{2 \cosh \frac{\pi}{p}(\lambda_{j}-\sigma_{\rho(j)} )}  \ .
\end{equation}

We claim that a local S-duality is implemented by the following Fourier transform:
\be
L_{(-q,p)}(\mu, \nu) =
\int  
e^{2 \pi i \mu \cdot \lambda} L_{(p,q)}(\lambda, \sigma)
e^{-2 \pi i  \nu \cdot \sigma}
d^N \lambda \, d^N \sigma  \ .
\label{Sdualityresult}
\ee 
To demonstrate this claim, we isolate the integrals over $\lambda_j$ and $\sigma_{\rho(j)}$:
 \es{IDef}{
I = \frac{1}{|p|}
  \int d \lambda_j d \sigma_{\rho(j)}  
e^{ \pi i \left( \frac{q}{p}( \lambda_j^2 -  \sigma_{\rho(j)}^2 )
+2 ( \mu_j  \lambda_j - \nu_{\rho(j)}  \sigma_{\rho(j)} )
 \right)}
\frac{1}{\cosh \frac{\pi}{p}( \lambda_j - \sigma_{\rho(j)})} \ .
 }
With the change of variables $x_\pm =  \lambda_j \pm  \sigma_{\rho(j)}$, this integral is straightforward to perform:
\begin{eqnarray}
I &=&
\frac{1}{2|p|}
\int d x_+ \, dx_- \, e^{\pi i \left( \frac{q}{p} x_+ x_- + x_+(\mu_j - \nu_{\rho(j)})
+ x_-(\mu_j + \nu_{\rho(j)}) \right)}
\sech \frac{\pi x_-}{p} 
\nonumber 
\\
&=&
\frac{1}{ |p|}
\int dx_- \, \delta\left(
\frac{q}{p} x_- + \mu_j - \nu_{\rho(j)} \right) e^{\pi i x_-(\mu_j + \nu_{\rho(j)})} 
\sech \frac{\pi x_-}{p} 
\\
&=&
\frac{1}{|q|}
\exp \left[ - \pi i \frac{p}{q} ( \mu_j^2 - \nu_{\rho(j)}^2) \right]
\frac{1}{\cosh \frac{\pi}{q} ( \mu_j - \nu_{\rho(j)} ) } \ . \nonumber
\end{eqnarray}
Taking the product over the eigenvalues and averaging over permutations yields (\ref{Sdualityresult}).

This local S-duality composes in a nice way.  Consider applying similar Fourier transforms to neighboring $(p, q)$-branes:
\begin{eqnarray}
\lefteqn{\int L_{(-q,p)}(\lambda, \mu) L_{(-q',p')}(\mu, \nu) d^N \mu 
 } \nonumber \\
 &=&
\int e^{2 \pi i \lambda \cdot \tilde \lambda} L_{(p,q)}(\tilde \lambda, \mu_1)
e^{2 \pi i \mu \cdot (\mu_2 - \mu_1)} L_{(p',q')} ( \mu_2, \tilde \nu) 
e^{-2 \pi i \tilde \nu \cdot \nu} d^N \tilde \lambda \,  d^N \mu_1 \, d^N \mu  \, d^N \mu_2 \, d^N \tilde \nu \nonumber \\
&=&
\int 
e^{2 \pi i \lambda \cdot \tilde \lambda} L_{(p,q)}(\tilde \lambda, \tilde \mu)
L_{(p',q')} ( \tilde \mu, \tilde \nu) 
e^{-2 \pi i \tilde \nu \cdot \nu} d^N \tilde \lambda \, d^N \tilde \nu \, d^N \tilde \mu
\ . 
\end{eqnarray}
Thus if we apply a local S-duality to each $(p_a, q_a)$-brane in the necklace, the factors of $e^{2 \pi i \mu \cdot \lambda}$ cancel out and the resulting partition function is invariant under a global action of S-duality.

We would like to give a better interpretation of this group action.
Consider acting on a single $(p,q)$-brane with this local S-duality:
 \es{SingleAction}{
\int L_{(-q,p)}(\lambda, \mu) L_{(p',q')}(\mu, \nu) d^N \mu =
\int e^{2 \pi i \lambda \cdot \tilde \lambda} L_{(p,q)}(\tilde \lambda, \tilde \mu)
e^{-2 \pi i \mu \cdot \tilde \mu } L_{(p',q')} ( \mu, \nu)  d^N \tilde \lambda \,  d^N \tilde \mu \, d^N \mu \ .
 }
One way of interpreting the $e^{-2 \pi i \mu \cdot \tilde \mu}$ is to posit that some new object has been inserted between the $(p,q)$-brane and the $(p',q')$-brane that implements a local S-duality.  This object contributes to the partition function an amount
 \es{LSDef}{
L_S(\mu, \tilde \mu) \equiv e^{-2 \pi i \mu \cdot \tilde \mu} \ .
 }
Similarly, right before the $(p, q)$ brane we introduced another object that undoes the local S-duality:
 \es{SInverse}{
  L_{S^{-1}}(\lambda, \tilde \lambda) \equiv e^{2 \pi i \lambda \cdot \tilde \lambda} \,.
 }

Let us see how the $T(U(N))$ theory arises.  So far, we have been thinking of the $(p,q)$-branes as the building blocks out of which we construct the partition function.  Alternately, we can decompose the partition function into the contributions from the D3-brane segments and associated $U(N)$ gauge groups.  
From the D3-brane point of view,  the object $L_S(\mu, \tilde \mu)$ implementing S-duality splits a D3-brane segment 
into two regions, each characterized by a $U(N)$ gauge theory, one with eigenvalues $\mu$ and one with eigenvalues $\tilde \mu$.  
The $L_{(p,q)}$ and $L_{(p',q')}$ factors do not have enough factors of
hyperbolic sine to describe two $U(N)$ gauge theories.  These factors are simple to recover if we say the new object has a partition function
 \es{LSTildeDef}{
\tilde L_S(\mu, \tilde \mu) = N!\,e^{-2 \pi i \mu \cdot \tilde \mu} \prod_{i < j} \frac{1} {4 \sinh \pi (\mu_i - \mu_j)  \sinh \pi ( \tilde \mu_i - \tilde \mu_j)} \ .
 }
This object has a natural interpretation as the partition function of the $T(U(N))$ theory.\footnote{Actually, the $T(U(N))$ partition function needs to
be antisymmetrized with respect to permuting $\mu$.  But since
$L_{(p,q)}(\mu, \lambda)$ is already antisymmetric in $\mu$, we can
get away with not antisymmetrizing $L_S$.}
This partition function has been found independently by \cite{Benvenuti:2011ga}.

At this point, it is clear that we should be able to realize any element of $SL(2, \mathbb{Z})$ acting locally on our necklace theories.  The group $SL(2, \Z)$ has two generators: $S$, which we discussed above, and $T$.  We define $T$ to send $q \to q + p$ and leave $p$ invariant.  If we think of $(p, q)$ as a two-component column vector on which $SL(2, \Z)$ acts in the fundamental representation, then $S$ and $T$ are the two-by-two matrices
 \es{STMatrices}{
  S = \begin{pmatrix} 0 & -1 \\
   1 & 0
   \end{pmatrix} \,, \qquad
   T = \begin{pmatrix} 1 & 0 \\
    1 & 1
    \end{pmatrix} \,.
 }
They satisfy $S^2 = -1$ and $(ST)^3 = 1$.  To find the action of $T$ on the matrix model, one can see from \eqref{pqFactor} that  
 \es{TDuality}{
L_{(p,q+p)}(\lambda, \sigma) = e^{\pi i \lambda \cdot \lambda - i \theta} L_{(p,q)}(\lambda, \sigma) 
e^{-\pi i \sigma \cdot \sigma + i \theta} \ ,
 }
where $\theta$ is a phase to be determined.
Similarly to $L_S$, one can therefore define
 \es{LTDefs}{
L_T(\sigma) \equiv e^{-\pi i \sigma \cdot \sigma + i \theta} \,, \qquad
 L_{T^{-1}} (\lambda) \equiv e^{\pi i \lambda \cdot \lambda - i \theta} \,,
 }
so one can describe the contribution of a $(p, q+p)$ five-brane to the matrix model integrand as $L_{T^{-1}} (\lambda) L_{(p, q)}(\lambda, \sigma) L_T(\sigma)$, where the factor of $L_T(\sigma)$ corresponds to the local action of a $T$-transformation and $L_{T^{-1}}(\lambda)$ corresponds to the local action of $T^{-1}$.

We have defined the effect of the $S$ and $T$ generators on the matrix model so that $S$ requires us to have two distinct sets of eigenvalues $\mu$ and $\tilde \mu$ in the two regions separated by an ``$S$-boundary'', while $T$ doesn't, acting just by multiplication of the eigenvalues $\sigma$ in the region containing the ``$T$-boundary'' by $L_T(\sigma)$.  We could have said, however, that a $T$-boundary also requires two distinct sets of eigenvalues $\sigma$ and $\tilde \sigma$ around it, in which case we should have described its contribution as $L_T(\sigma) \delta(\sigma - \tilde \sigma)$.  By the same logic, it follows that the identity element should be also viewed as a Dirac delta function $L_1(\lambda, \sigma) = \delta (\lambda-\sigma)$.  The operator equal to minus the identity is  $L_{-1}(\lambda, \sigma) = \delta (\lambda+\sigma)$.  
We can verify by explicit computation that
 \es{SSInverse}{
L_{SS^{-1}}(\mu, \lambda) = \int e^{2 \pi i \sigma \cdot(\lambda- \mu)} d^N \sigma = 
\delta( \mu - \lambda) = L_1(\mu, \lambda) \,,
 }
and similarly that  $L_{TT^{-1}} = L_1$, $L_{S^2} = L_{-1}$ and $L_{(ST)^3} = L_1$
provided $3 \theta = \pi N /4$.  

We should mention that the expressions \eqref{LSDef}, \eqref{SInverse}, and \eqref{LTDefs} can also be justified by analyzing the case $N=1$, where the action of $SL(2, \Z)$ on Chern-Simons theories was described in \cite{Witten:2003ya, Gaiotto:2008ak}.   Indeed, as explained in \cite{Witten:2003ya}, the $T$ generator just shifts the CS level by one unit, so the action changes by $-\frac{1}{4 \pi}  \int A \wedge dA$ plus its supersymmetric completion.  The classical contribution to the partition function from the scalar $\lambda$ in the ${\cal N} = 2$ vector multiplet then gives $L_T(\lambda)$.  The action of $S$ on a CS theory with some gauge field $\tilde A$ consists of introducing another dynamical gauge field $A$ that couples to the topological current $*d\tilde A$.  This coupling takes the form of an off-diagonal Chern-Simons term $-\frac{1}{2 \pi} \int A \wedge d \tilde A$ plus its supersymmetric completion.  If $\mu$ and $\tilde \mu$ are the scalars in the corresponding vector multiplets, the classical contribution to the partition function from this off-diagonal Chern-Simons term is precisely given by \eqref{LSDef}.

\section{Discussion}

Our main results are additional evidence presented in Section \ref{BRANE} for the $F$-theorem conjecture,
the relations (\ref{resultone}) and (\ref{resulttwo}) between numbers of chiral operators and eigenvalue distributions,  and a conjectured form (\ref{eq:partitionpq}) of the matrix model corresponding to a $(p_a, q_a)$-brane construction
in type IIB string theory.  
Each of these results requires some brief discussion.

We would like to investigate further our proposed matrix model (\ref{eq:partitionpq}).  In particular, it would be interesting to see how 
(\ref{eq:partitionpq}) transforms under Seiberg duality. 
 One statement of the $s$-rule \cite{Hanany:1996ie} 
is that a theory breaks supersymmetry for which there exists a Seiberg duality that produces a gauge group with a negative rank.  Given the matrix model's status as a type of supersymmetric index, one expects that the partition function should vanish for theories that violate the $s$-rule \cite{Kapustin:2010mh}.

Regarding the $F$-theorem, we have not constructed any explicit RG flows, either on the gauge theory side or, via the AdS/CFT correspondence, on the gravity side.  Instead, we have posited the existence of reasonable seeming flows, and we have examined $F$ at the IR and UV fixed points.  For example, by adding a mass to fundamental flavors, one should be able to flow from a theory with a $(0,q)$-brane in the UV to one without it in the IR.  The corresponding volume of the tri-Sasaki Einstein manifold will increase, leading to a decrease in $F$.  Similarly, we can consider an RG flow where a $(p,q)$-brane forms a bound state with a $(p',q')$-brane.  Under such a flow, $F$ will also decrease.  
Given the result of refs.\ \cite{Myers:2010tj,Myers:2010xs} described in the introduction, it seems likely that any gravity dual of an RG flow will obey the $F$-theorem.  One way of interpreting our results, given that our flows also obey the $F$-theorem, is that it may be possible to realize these flows as solutions of eleven-dimensional supergravity. 

Given that the operator counting relations (\ref{resultone}) and (\ref{resulttwo}) can be defined for essentially any KWY matrix model and corresponding superconformal field theory, one wonders if they hold more generally.  In a sequel to this paper \cite{Gulotta:2011aa}, 
we investigate the large $N$ limit
of the KWY matrix models for a number of other superconformal field theories, and we find that indeed these relations are always satisfied.  We 
look at necklace quivers with additional adjoint and fundamental fields.  We look at a couple of non-necklace quivers, for example a $\mathbb{Z}_2 \times \mathbb{Z}_2$ orbifold theory.
In order to look at theories with ${\mathcal N}=2$ supersymmetry, we have to generalize the KWY matrix model to allow for arbitrary $R$-charges \cite{Jafferis:2010un}.  
It turns out that in the theories we study, relations (\ref{resultone}) and (\ref{resulttwo}) are valid not just for the correct $R$-charges but for any $R$-charges compatible with the marginality of the superpotential.

One constraint in these investigations is that for chiral theories, the KWY matrix model does not seem to have a large $N$ limit that is compatible with a dual eleven-dimensional supergravity description \cite{Jafferis:2011zi}.  It will be interesting to see if the 
relations (\ref{resultone}) and (\ref{resulttwo}) can give any insight into how the matrix model might be modified to allow for such a limit.

\section*{Acknowledgments}
We would like to thank A.~Caraiani, D.~Jafferis, B.~Safdi, T.~Tesileanu, M.~Yamazaki and especially I.~Klebanov for discussion.
CH, DG, and SP were supported in party by the NSF under Grant No.~PHY-0756966.
CH and DG were also supported in part by the US NSF under Grant No.~PHY-0844827, and SP by Princeton University through a Porter Ogden Jacobus Fellowship.  CH also thanks the Sloan Foundation for partial support. 

\appendix

\section{Proof of the tree formula}
\label{app:treeproof}

We prove the Tree Formula by assuming that the columns $\beta_a$ are ordered in the way described in Corollary \ref{cor:n2ordered}.  Thus, the proof reduces to showing that the following equation holds
\be
\sum_{a=1}^d \frac{\gamma_{a(a+1)}}{\sigma_a \sigma_{a+1}} = 
2 \frac{\sum_{(V,E)\in T} \prod_{(a,b)\in E} \gamma_{ab}}{\prod_{a=1}^d \sigma_a} \ .
\ee

We have the ``Ptolemy relations''
\begin{equation} \label{eq:ptolemy}
\g_{ab} \g_{cd} = \g_{ac} \g_{bd} + \g_{ad} \g_{bc} \qquad \mathrm{if} \qquad ab//cd.
\end{equation}
The notation $//$ means $a$ and $b$ separate $c$ and $d$.
We can use the relations to show
\begin{equation} \label{eq:sums}
\g_{(a-1)(a+1)}\s_a + 2 \g_{(a-1)a} \g_{a(a+1)} = \g_{(a-1)a} \s_{a+1} + \g_{a(a+1)} \s_{a-1}.
\end{equation}

Our starting point is the Kirchhoff matrix-tree theorem
which gives a relation between the absolute value of a certain determinant and the sum over trees.  In particular, consider the matrix $Q^*$ where
\begin{equation}
Q^* = \left(
\begin{array}{ccccc}
\s_2 & -\g_{23} & -\g_{24} & \dots & -\g_{2d} \\
-\g_{32} & \s_3 & -\g_{34} & \dots & -\g_{3d} \\
-\g_{42} & -\g_{43} & \s_4 & \dots & -\g_{4d} \\
\vdots & \vdots & \vdots & dots & \vdots \\
-\g_{d2} & -\g_{d3} & -\g_{d4} & \dots & \s_d
\end{array}
\right).
\end{equation}
The matrix-tree theorem states that
\be
 \det Q^*  = \sum_{(V,E)\in T} \prod_{(a,b)\in E} \gamma_{ab} \ .
\ee
We observe that if we take $\g_{(a+1)a}$ times the $(a-2)$nd row
minus $\g_{(a-1)(a+1)}$ times the $(a-1)$st row plus
$\g_{a(a-1)}$ times the $a$th row, then most of the entries will cancel out.
So we have $AQ^*=B$ where
\begin{eqnarray}
A & = &
\left( \begin{array}{ccccc}
\g_{13} + \g_{23} & -\g_{12} + \g_{23} & \g_{23} & \dots & \g_{23} \\
-\g_{34} & \g_{24} & -\g_{23} & \dots & 0 \\
0 & -\g_{45} & \g_{35} & \dots & 0 \\
\vdots & \vdots & \vdots & dots & \vdots \\
\g_{(d-1)d} & \g_{(d-1)d} & \g_{(d-1)d} & \dots & \g_{(d-1)1} + \g_{(d-1)d}
\end{array} \right) \\
B & = &
\left( \begin{array}{ccccc}
\g_{12} \s_3 + \g_{23} \s_1 & -\g_{12} \s_3 & 0 & \dots & 0 \\
-\g_{34} \s_2 & \g_{23} \s_4 + \g_{34} \s_2 & -\g_{23} \s_4 & \dots & 0 \\
0 & -\g_{45} \s_3 & \g_{34} \s_5 + \g_{45} \s_3 & \dots & 0 \\
\vdots & \vdots & \vdots & dots & \vdots \\
0 & 0 & 0 & \dots & \g_{(d-1)d} \s_{1} + \g_{d1} \s_{d-1}
\end{array} \right)
\end{eqnarray}
In constructing $A$ we used the fact that the missing row of $Q$ is minus
the sum of all of the other rows.
We used \eqref{eq:ptolemy} and \eqref{eq:sums} to simplify $B$.
Tri-diagonal matrices satisfy a three term recurrence relation.
If we let $B_a$ be the matrix consisting of the first $a$ rows of $B$,
then
\begin{equation}
\det B_a = (\g_{a(a+1)} \s_{a+2} + \g_{(a+1)(a+2)} \s_a) \det B_{a-1} -
\g_{(a+1)(a+2)} \g_{(a-1)a} \s_a \s_{a+1} \det B_{a-2}.
\end{equation}
We can show by induction that
\begin{equation}
\det(B_a) = \left( \prod_{b=2}^{a} \g_{b(b+1)} \right) \left( \prod_{b=1}^{a+2} \s_b \right) \left( \sum_{b=1}^{a+1} \frac{\g_{b(b+1)}}{\s_b \s_{b+1}} \right)
\end{equation}
and in particular 
\begin{equation}
\det(B) = \det(B_{d-1}) = \frac{\s_1}{\g_{12} \g_{d1}} \left( \prod_{a=1}^d \s_a \g_{a(a+1)} \right) \left(\sum_{a=1}^d\frac{\g_{a(a+1)}}{\s_a \s_{a+1}}\right). 
\end{equation}
Therefore we want to show that $\det(A) = 2 \s_1 \prod_{a=2}^{d-1} \g_{a(a+1)}$.

We let $A'$ be $A$ with the $\g_{23}$ in the top row and the $\g_{(d-1)d}$ in
the bottom row removed.
Let $A_{ab}'$ be the matrix containing the $a$th through $b$th rows
and columns of $A'$.
Write the determinant of $A$ as
\[
\det(A) = (A'_1 + \g_{23} u) \wedge A'_2 \wedge \ldots \wedge A'_{d-2} \wedge (A'_{d-1} + \g_{(d-1)d} u)
\]
where $u = (1,1, \ldots, 1)$ and $A'_a$ are the rows of $A'$.  Expanding out the anti-symmetric product, we find that
\begin{eqnarray}
\nonumber
\det(A) & = & \det(A') 
+ \g_{23} u \wedge A'_2 \wedge \ldots \wedge A'_{d-1} + \g_{(d-1)d} A'_1 \wedge \ldots \wedge A'_{d-2} \wedge u
\\
& = & \det(A')+ \sum_{a=1}^d \left[  \g_{23}
\left(\prod_{b=1}^{a-1} \g_{(b+2)(b+3)} \right)
\det A_{(a+1)(d-1)}' \right. 
\nonumber \\ 
& & \left.
+ \g_{d(d-1)} \left(\prod_{b=a+2}^d \g_{(b-1)(b-2)}\right) \det A_{1(a-1)}' \right] \ .
\end{eqnarray}

Using the three term recurrence relation for the determinant of a tridiagonal matrix, we can show by induction that
 \es{DetAprime}{
\det A_{1a}' &=   \g_{1(a+2)} \prod_{b=2}^{a} \g_{b(b+1)} \ , \\
\det A_{a(d-1)}' &=  \g_{1a} \prod_{b=a}^{d-2} \g_{(b+1)(b+2)} \ , \\
\det A' &= 0 \,,
 }
where last equation follows from the second one by setting $a=1$ and using $\g_{11} = 0$.  Therefore
\es{GotdetA}{
\det(A) &= 2 \left(\sum_a \g_{1a}\right) \left( \prod_{a=2}^{d-1} \g_{a(a+1)} \right) \\
&= 2 \s_1 \prod_{a=2}^{d-1} \g_{a(a+1)} 
}
as expected.

\section{A virial theorem for matrix models}
\label{VIRIAL}

To leading order in $N$, the matrix model for superconformal field theories with M-theory duals involves extremizing a free energy functional of the type
 \es{FreeFunctional}{
  F[\rho, y_a] = \int dx\, \rho(x)^2 f(y_a(x)) - \int dx\, \rho(x) V(x, y_a(x)) \,,
 }
for some functions $f$ and $V$.   In all examples, $V$ is homogeneous of degree $1$ in $x$, namely it satisfies
 \es{VHom}{
  V = x \frac{\partial V}{\partial x} \,.
 }
The free energy functional \eqref{FreeFunctional} should be extremized under the constraint that $\rho$ is a density normalized so that $\int dx\, \rho(x) = 1$.  Such a constraint can be implemented with a Lagrange multiplier $\mu$ by defining a new functional
 \es{Ftilde}{
  \tilde F[\rho, y_a] = F[\rho, y_a] - 2 \pi N^{3/2} \mu \left(\int dx\, \rho(x) - 1 \right) 
 }
and varying \eqref{Ftilde} instead of \eqref{FreeFunctional} with respect to $\rho$ and $y_a$.  If we denote by $F$ the on-shell value of $F[\rho, y_a]$ (or of $\tilde F[\rho, y_a]$), we will now show that
 \es{FmuRelation}{
  F = \frac{4 \pi N^{3/2}}{3} \mu \,,
 }
regardless of which particular matrix model we're solving provided that eq.~\eqref{VHom} is obeyed.

\subsection{A slick proof}

The equations of motion following from \eqref{Ftilde} are
 \es{eoms}{
  2 \rho(x) f(y_a(x)) - V(x, y_a(x)) &= 2 \pi N^{3/2} \mu \,, \\
  \rho(x)^2 \partial_a f(y_a(x)) - \rho(x) \partial_a V(x, y_a(x))  &=0 \,,
 }
Differentiating the first equation with respect to $x$, multiplying it by $\rho(x)$, and using the second equation we get
 \es{eomsDer}{
  \rho(x) \bigg[ 2 \rho'(x) f(y_a(x))  + \rho(x)^2 \sum_a \partial_a f(y_a(x)) y_a'(x) \bigg] 
    = \rho(x) \frac{\partial}{\partial x} V(x, y_a(x))\,.
 }

To prove \eqref{FmuRelation}, consider the function
 \es{GDef}{
  G(x) = x \rho(x)^2 f(y_a(x)) \,.
 }
Let's calculate the derivative of this function with respect to $x$:
 \es{GDer}{
  G'(x) &= \rho(x)^2 f(y_a(x)) + 2 x \rho(x) \rho'(x) f(y_a(x))  + 
  x \rho(x)^2 \sum_a \partial_a f(y_a(x)) y_a'(x) \\
  &=\rho(x)^2 f(y_a(x)) + x \rho(x) \frac{\partial}{\partial x} V(x, y_a(x)) \\
  &=\rho(x)^2 f(y_a(x)) + \rho(x) V(x, y_a(x)) \,,
 }
where in the second line we used \eqref{eomsDer} and in the third line we used \eqref{VHom}.  The function $G(x)$ vanishes at $\pm \infty$, so $\int dx\, G'(x) = 0$, which from \eqref{GDer} implies that on-shell we have
 \es{Virial}{
  \int dx\, \rho(x)^2 f(y_a(x)) = -\int dx\, \rho(x) V(x, y_a(x)) \,.
 }
From \eqref{FreeFunctional} we have then that
 \es{Fonshell}{
  F = 2 \int dx\, \rho(x)^2 f(y_a(x)) \,.
 }
Multiplying the first equation in \eqref{eoms} by $\rho(x)$ and integrating in $x$ we get
 \es{muonshell}{
  2 \pi N^{3/2} \mu = 3 \int dx\, \rho(x)^2 f(y_a(x)) \,.
 }
Taking the ratio of the last two equations one obtains \eqref{FmuRelation}.

\subsection{A more enlightening proof}

The proof given above is really that of a virial theorem.  To put the virial theorem in a more familiar form, let's define the cumulative distribution $t(x) = \int_{-\infty}^x dx' \rho(x') \in [0, 1]$ and express \eqref{FreeFunctional} as the action
 \es{Action}{
  S[x(t), y_a(t)] =  \int dt\, L \,, \qquad L \equiv \frac{f(y_a)}{\dot{x}} -  V(x, y_a) \,,
 }
where $x$ and $y_a$ should be thought of as functions of $t$.  To go to a Hamiltonian formulation, we introduce the momentum conjugate to $x$:
 \es{pxDef}{
  p_x \equiv \frac{\partial L}{\partial \dot{x}} = -\frac{f(y_a)}{\dot{x}^2}  \,.
 }
The Hamiltonian is
 \es{Hamilt}{
  H \equiv p_x \dot{x} - L = \text{KE} + \text{PE} \,, \qquad
   \text{KE} \equiv  -2 \frac{f(y_a)}{\dot{x}} = -2 f(y_a)^{1/2} \sqrt{-p_x} \,, \qquad
  \text{PE} \equiv V(x, y_a) \,.
 }
In deriving a virial theorem for this Hamiltonian, one can just ignore the dependence on $y_a$ because the $y_a$ are non-dynamical.  Since the kinetic energy $\text{KE}$ is homogeneous of degree $1/2$ in $p_x$ and the potential energy is homogeneous of degree $1$ in $x$, we have
 \es{VirialHamilt}{
  \frac 12 \langle \text{KE} \rangle = \langle \text{PE} \rangle \,,
 }
where $\langle \cdots \rangle$ means $\int_0^1 dt\, ( \cdots )$.  From \eqref{Action} one sees that
 \es{onshellaction}{
  F \equiv S_{\text{on-shell}} = -\frac 12 \langle \text{KE} \rangle - \langle \text{PE} \rangle 
   = -\langle \text{KE} \rangle \,.
 }
From the first equation in \eqref{eoms} we see that $H = -2 \pi N^{3/2} \mu$, so
 \es{muonshellAgain}{
  \frac{N^{3/2}}{2 \pi} \mu = -\langle H \rangle = - \langle \text{KE} \rangle - \langle \text{PE} \rangle
  = -\frac 32 \langle \text{KE} \rangle \,.
 }
Taking the ratio of \eqref{onshellaction} and \eqref{muonshellAgain} one again obtains \eqref{FmuRelation}.

\bibliographystyle{ssg}
\bibliography{necklace}

\begingroup\raggedright\begin{thebibliography}{10}

\bibitem{Kapustin:2009kz}
A.~Kapustin, B.~Willett, and I.~Yaakov, ``{Exact Results for Wilson Loops in
  Superconformal Chern-Simons Theories with Matter},'' {\em JHEP} {\bf 1003}
  (2010) 089, \href{http://xxx.lanl.gov/abs/0909.4559}{{\tt 0909.4559}}.

\bibitem{Jafferis:2010un}
D.~L. Jafferis, ``{The Exact Superconformal R-Symmetry Extremizes $Z$},''
  \href{http://xxx.lanl.gov/abs/1012.3210}{{\tt 1012.3210}}.

\bibitem{Hama:2010av}
N.~Hama, K.~Hosomichi, and S.~Lee, ``{Notes on SUSY Gauge Theories on
  Three-Sphere},'' \href{http://xxx.lanl.gov/abs/1012.3512}{{\tt 1012.3512}}.

\bibitem{Pestun:2007rz}
V.~Pestun, ``{Localization of gauge theory on a four-sphere and supersymmetric
  Wilson loops},'' \href{http://xxx.lanl.gov/abs/0712.2824}{{\tt 0712.2824}}.

\bibitem{Jafferis:2011zi}
D.~L. Jafferis, I.~R. Klebanov, S.~S. Pufu, and B.~R. Safdi, ``{Towards the
  F-Theorem: N=2 Field Theories on the Three-Sphere},''
  \href{http://xxx.lanl.gov/abs/1103.1181}{{\tt 1103.1181}}.

\bibitem{Amariti:2011da}
A.~Amariti and M.~Siani, ``{Z-extremization and F-theorem in Chern-Simons
  matter theories},'' \href{http://xxx.lanl.gov/abs/1105.0933}{{\tt
  1105.0933}}.

\bibitem{Zamolodchikov:1986gt}
A.~B. Zamolodchikov, ``{Irreversibility of the Flux of the Renormalization
  Group in a 2D Field Theory},'' {\em JETP Lett.} {\bf 43} (1986) 730--732.

\bibitem{Cardy:1988cwa}
J.~L. Cardy, ``{Is There a c Theorem in Four-Dimensions?},'' {\em Phys.Lett.}
  {\bf B215} (1988) 749--752.

\bibitem{Intriligator:2003jj}
K.~A. Intriligator and B.~Wecht, ``{The Exact superconformal R symmetry
  maximizes a},'' {\em Nucl.Phys.} {\bf B667} (2003) 183--200,
  \href{http://xxx.lanl.gov/abs/hep-th/0304128}{{\tt hep-th/0304128}}.

\bibitem{Gubser:1998bc}
S.~S. Gubser, I.~R. Klebanov, and A.~M. Polyakov, ``Gauge theory correlators
  from non-critical string theory,'' {\em Phys. Lett.} {\bf B428} (1998)
  105--114, \href{http://xxx.lanl.gov/abs/hep-th/9802109}{{\tt
  hep-th/9802109}}.

\bibitem{Maldacena:1997re}
J.~M. Maldacena, ``{The large $N$ limit of superconformal field theories and
  supergravity},'' {\em Adv. Theor. Math. Phys.} {\bf 2} (1998) 231--252,
  \href{http://xxx.lanl.gov/abs/hep-th/9711200}{{\tt hep-th/9711200}}.

\bibitem{Witten:1998qj}
E.~Witten, ``{Anti-de Sitter space and holography},'' {\em Adv. Theor. Math.
  Phys.} {\bf 2} (1998) 253--291,
  \href{http://xxx.lanl.gov/abs/hep-th/9802150}{{\tt hep-th/9802150}}.

\bibitem{Emparan:1999pm}
R.~Emparan, C.~V. Johnson, and R.~C. Myers, ``{Surface terms as counterterms in
  the AdS/CFT correspondence},'' {\em Phys. Rev.} {\bf D60} (1999) 104001,
  \href{http://xxx.lanl.gov/abs/hep-th/9903238}{{\tt hep-th/9903238}}.

\bibitem{Casini:2011kv}
H.~Casini, M.~Huerta, and R.~C. Myers, ``{Towards a derivation of holographic
  entanglement entropy},'' \href{http://xxx.lanl.gov/abs/1102.0440}{{\tt
  1102.0440}}.

\bibitem{Myers:2010tj}
R.~C. Myers and A.~Sinha, ``{Holographic c-theorems in arbitrary dimensions},''
  {\em JHEP} {\bf 1101} (2011) 125,
  \href{http://xxx.lanl.gov/abs/1011.5819}{{\tt 1011.5819}}.

\bibitem{Myers:2010xs}
R.~C. Myers and A.~Sinha, ``{Seeing a c-theorem with holography},'' {\em
  Phys.Rev.} {\bf D82} (2010) 046006,
  \href{http://xxx.lanl.gov/abs/1006.1263}{{\tt 1006.1263}}.

\bibitem{Herzog:2010hf}
C.~P. Herzog, I.~R. Klebanov, S.~S. Pufu, and T.~Tesileanu, ``{Multi-Matrix
  Models and Tri-Sasaki Einstein Spaces},'' {\em Phys.Rev.} {\bf D83} (2011)
  046001, \href{http://xxx.lanl.gov/abs/1011.5487}{{\tt 1011.5487}}.

\bibitem{Klebanov:1996un}
I.~R. Klebanov and A.~A. Tseytlin, ``{Entropy of near extremal black
  $p$-branes},'' {\em Nucl.Phys.} {\bf B475} (1996) 164--178,
  \href{http://xxx.lanl.gov/abs/hep-th/9604089}{{\tt hep-th/9604089}}.

\bibitem{Aharony:2008ug}
O.~Aharony, O.~Bergman, D.~L. Jafferis, and J.~Maldacena, ``{${\cal N}=6$
  superconformal Chern-Simons-matter theories, M2-branes and their gravity
  duals},'' {\em JHEP} {\bf 10} (2008) 091,
  \href{http://xxx.lanl.gov/abs/0806.1218}{{\tt 0806.1218}}.

\bibitem{Jafferis:2008qz}
D.~L. Jafferis and A.~Tomasiello, ``{A simple class of ${\cal N}=3$
  gauge/gravity duals},'' {\em JHEP} {\bf 10} (2008) 101,
  \href{http://xxx.lanl.gov/abs/0808.0864}{{\tt 0808.0864}}.

\bibitem{Franco:2009sp}
S.~Franco, I.~R. Klebanov, and D.~Rodriguez-Gomez, ``{M2-branes on Orbifolds of
  the Cone over $Q^{1,1,1}$},'' {\em JHEP} {\bf 08} (2009) 033,
  \href{http://xxx.lanl.gov/abs/0903.3231}{{\tt 0903.3231}}.

\bibitem{Aganagic:2009zk}
M.~Aganagic, ``{A Stringy Origin of M2 Brane Chern-Simons Theories},'' {\em
  Nucl.Phys.} {\bf B835} (2010) 1--28,
  \href{http://xxx.lanl.gov/abs/0905.3415}{{\tt 0905.3415}}.

\bibitem{Jafferis:2009th}
D.~L. Jafferis, ``{Quantum corrections to ${\cal N}=2$ Chern-Simons theories
  with flavor and their AdS4 duals},''
  \href{http://xxx.lanl.gov/abs/0911.4324}{{\tt 0911.4324}}.

\bibitem{Benini:2009qs}
F.~Benini, C.~Closset, and S.~Cremonesi, ``{Chiral flavors and M2-branes at
  toric CY4 singularities},'' {\em JHEP} {\bf 1002} (2010) 036,
  \href{http://xxx.lanl.gov/abs/0911.4127}{{\tt 0911.4127}}.

\bibitem{Marino:2009jd}
M.~Marino and P.~Putrov, ``{Exact Results in ABJM Theory from Topological
  Strings},'' {\em JHEP} {\bf 1006} (2010) 011,
  \href{http://xxx.lanl.gov/abs/0912.3074}{{\tt 0912.3074}}.

\bibitem{Santamaria:2010dm}
R.~C. Santamaria, M.~Marino, and P.~Putrov, ``{Unquenched flavor and tropical
  geometry in strongly coupled Chern-Simons-matter theories},''
  \href{http://xxx.lanl.gov/abs/1011.6281}{{\tt 1011.6281}}.

\bibitem{Martelli:2011qj}
D.~Martelli and J.~Sparks, ``{The large N limit of quiver matrix models and
  Sasaki-Einstein manifolds},'' \href{http://xxx.lanl.gov/abs/1102.5289}{{\tt
  1102.5289}}.

\bibitem{Cheon:2011th}
S.~Cheon, D.~Gang, S.~Kim, and J.~Park, ``{Refined test of AdS4/CFT3
  correspondence for N=2,3 theories},''
  \href{http://xxx.lanl.gov/abs/1102.4273}{{\tt 1102.4273}}.

\bibitem{Gauntlett:1997pk}
J.~P. Gauntlett, G.~W. Gibbons, G.~Papadopoulos, and P.~K. Townsend,
  ``{Hyper-Kaehler manifolds and multiply intersecting branes},'' {\em Nucl.
  Phys.} {\bf B500} (1997) 133--162,
  \href{http://xxx.lanl.gov/abs/hep-th/9702202}{{\tt hep-th/9702202}}.

\bibitem{Gaiotto:2008ak}
D.~Gaiotto and E.~Witten, ``{S-Duality of Boundary Conditions In ${\cal N}=4$
  Super Yang-Mills Theory},'' \href{http://xxx.lanl.gov/abs/0807.3720}{{\tt
  0807.3720}}.

\bibitem{Benvenuti:2011ga}
S.~Benvenuti and S.~Pasquetti, ``{3D-partition functions on the sphere: exact
  evaluation and mirror symmetry},''
  \href{http://xxx.lanl.gov/abs/1105.2551}{{\tt 1105.2551}}.

\bibitem{Nishioka:2011dq}
T.~Nishioka, Y.~Tachikawa, and M.~Yamazaki, ``{3d Partition Function as Overlap
  of Wavefunctions},'' \href{http://xxx.lanl.gov/abs/1105.4390}{{\tt
  1105.4390}}.

\bibitem{Yee:2006ba}
H.-U. Yee, ``{AdS/CFT with Tri-Sasakian Manifolds},'' {\em Nucl.Phys.} {\bf
  B774} (2007) 232--255, \href{http://xxx.lanl.gov/abs/hep-th/0612002}{{\tt
  hep-th/0612002}}.

\bibitem{Bergman:2001qi}
A.~Bergman and C.~P. Herzog, ``{The Volume of some nonspherical horizons and
  the AdS / CFT correspondence},'' {\em JHEP} {\bf 0201} (2002) 030,
  \href{http://xxx.lanl.gov/abs/hep-th/0108020}{{\tt hep-th/0108020}}.

\bibitem{Martelli:2006yb}
D.~Martelli, J.~Sparks, and S.-T. Yau, ``{Sasaki-Einstein manifolds and volume
  minimisation},'' {\em Commun.Math.Phys.} {\bf 280} (2008) 611--673,
  \href{http://xxx.lanl.gov/abs/hep-th/0603021}{{\tt hep-th/0603021}}.

\bibitem{BD}
R.~Bielawski and A.~S. Dancer, ``{The Geometry and topology of toric
  hyperk\"ahler manifolds},'' {\em Comm. Anal. Geom.} {\bf 8} (2000) 727--760.

\bibitem{HKLR}
N.~J. Hitchin, A.~Karlhede, U.~Lindstr\"om, and M.~Ro\v{c}ek, ``{Hyperk\"ahler
  metrics and supersymmetry},'' {\em Commun. Math. Phys.} {\bf 108} (1987)
  535--589.

\bibitem{Kitao:1998mf}
T.~Kitao, K.~Ohta, and N.~Ohta, ``{Three-dimensional gauge dynamics from brane
  configurations with (p,q)-fivebrane},'' {\em Nucl. Phys.} {\bf B539} (1999)
  79--106, \href{http://xxx.lanl.gov/abs/hep-th/9808111}{{\tt hep-th/9808111}}.

\bibitem{Bergman:1999na}
O.~Bergman, A.~Hanany, A.~Karch, and B.~Kol, ``{Branes and supersymmetry
  breaking in three-dimensional gauge theories},'' {\em JHEP} {\bf 9910} (1999)
  036, \href{http://xxx.lanl.gov/abs/hep-th/9908075}{{\tt hep-th/9908075}}.

\bibitem{Imamura:2008nn}
Y.~Imamura and K.~Kimura, ``{On the moduli space of elliptic
  Maxwell-Chern-Simons theories},'' {\em Prog.Theor.Phys.} {\bf 120} (2008)
  509--523, \href{http://xxx.lanl.gov/abs/0806.3727}{{\tt 0806.3727}}.

\bibitem{Gubser:1998fp}
S.~S. Gubser and I.~R. Klebanov, ``{Baryons and domain walls in an N=1
  superconformal gauge theory},'' {\em Phys.Rev.} {\bf D58} (1998) 125025,
  \href{http://xxx.lanl.gov/abs/hep-th/9808075}{{\tt hep-th/9808075}}.

\bibitem{Berenstein:2002ke}
D.~Berenstein, C.~P. Herzog, and I.~R. Klebanov, ``{Baryon spectra and AdS /CFT
  correspondence},'' {\em JHEP} {\bf 0206} (2002) 047,
  \href{http://xxx.lanl.gov/abs/hep-th/0202150}{{\tt hep-th/0202150}}.

\bibitem{Kapustin:2010xq}
A.~Kapustin, B.~Willett, and I.~Yaakov, ``{Nonperturbative Tests of
  Three-Dimensional Dualities},'' {\em JHEP} {\bf 10} (2010) 013,
  \href{http://xxx.lanl.gov/abs/1003.5694}{{\tt 1003.5694}}.

\bibitem{Witten:2003ya}
E.~Witten, ``{SL(2,Z) action on three-dimensional conformal field theories with
  Abelian symmetry},'' \href{http://xxx.lanl.gov/abs/hep-th/0307041}{{\tt
  hep-th/0307041}}.

\bibitem{Hanany:1996ie}
A.~Hanany and E.~Witten, ``{Type IIB superstrings, BPS monopoles, and
  three-dimensional gauge dynamics},'' {\em Nucl.Phys.} {\bf B492} (1997)
  152--190, \href{http://xxx.lanl.gov/abs/hep-th/9611230}{{\tt
  hep-th/9611230}}.

\bibitem{Kapustin:2010mh}
A.~Kapustin, B.~Willett, and I.~Yaakov, ``{Tests of Seiberg-like Duality in
  Three Dimensions},'' \href{http://xxx.lanl.gov/abs/1012.4021}{{\tt
  1012.4021}}.

\bibitem{Gulotta:2011aa}
D.~R. Gulotta, C.~P. Herzog, and S.~S. Pufu, ``{Operator Counting and
  Eigenvalue Distributions for 3D Supersymmetric Gauge Theories},''
  \href{http://xxx.lanl.gov/abs/1106.5484}{{\tt 1106.5484}}.

\end{thebibliography}\endgroup

\end{document}